%%%%%%%%%%%%%%%%%%%%%%%%%%%%%%%%%%%%%%%%%%%%%%%%%%%%%%%%%%%%%%%%%%%%%%%%%%%%%%%%
%2345678901234567890123456789012345678901234567890123456789012345678901234567890
%        1         2         3         4         5         6         7         8
\pdfoutput=1
\documentclass[journal]{IEEEtran}  % Comment this line out if you need a4paper

%\documentclass[a4paper, 10pt, conference]{ieeeconf}      % Use this line for a4 paper

%\IEEEoverridecommandlockouts                              % This command is only needed if
                                                          % you want to use the \thanks command

%\overrideIEEEmargins                                      % Needed to meet printer requirements.
%\usepackage[T1]{fontenc}			% For font encodings

%\usepackage{epsfig}
\usepackage{cite}
\usepackage{amsmath}
\usepackage{fancyhdr}
\usepackage{graphicx}
\usepackage{amssymb}
\usepackage{times}
\usepackage{graphics} % for pdf, bitmapped graphics files
\usepackage{epsf} % for postscript graphics files
\usepackage{times} % Times new roman font without changes to mathcal environment
\usepackage{amsmath} % assumes amsmath package installed
\usepackage{amssymb}  % assumes amsmath package installed
\usepackage{relsize}
\usepackage{algorithmicx}
\usepackage[ruled]{algorithm}
\usepackage{algpseudocode}
\usepackage{epstopdf}
\usepackage{subfigure,multirow}
\usepackage{float}
\usepackage{bm}
\usepackage{mathrsfs}

\newtheorem{rmk}{Remark}
\newtheorem{pro}{Proposition}
\newtheorem{dfn}{Definition}
\newtheorem{asm}{Assumption}
\newtheorem{thm}{Theorem}
\newtheorem{prp}{Property}
\newtheorem{proof}{Proof}
\newtheorem{proofpart}{Part}

\makeatletter
\@addtoreset{proofpart}{thm}
\makeatother
%\raggedbottom

\begin{document}
	
	\title{	Data-driven feedback stabilization of nonlinear systems: Koopman-based model predictive control
	}
	
	\author{ Abhinav Narasingam and Joseph Sang-Il Kwon% <-this % stops a space
		%\thanks{*This work was not supported by any organization}% <-this % stops a space
		\thanks{Abhinav Narasingam, and Joseph Sang-Il Kwon (email: kwonx075@tamu.edu) are with the Artie McFerrin Department of Chemical Engineering, Texas A\&M University, College Station, TX 77845 USA
	}
 }
	\maketitle

%%%%%%%%%%%%%%%%%%%%%%%%%%%%%%%%%%%%%%%%%%%%%%%%%%%%%%%%%%%%%%%%%%%%%%%%%%%%%%%%
\begin{abstract}
In this work, a predictive control framework is presented for feedback stabilization of nonlinear systems. To achieve this, we integrate Koopman operator theory with Lyapunov-based model predictive control (LMPC). The main idea is to transform nonlinear dynamics from state-space to function space using Koopman eigenfunctions - for control affine systems this results in a bilinear model in the (lifted) function space. Then, a predictive controller is formulated in Koopman eigenfunction coordinates which uses an auxiliary Control Lyapunov Function (CLF) based bounded controller as a constraint to ensure stability of the Koopman system in the function space. Provided there exists a continuously differentiable inverse mapping between the original state-space and (lifted) function space, we show that the designed controller is capable of translating the feedback stabilizability of the Koopman bilinear system to the original nonlinear system. Remarkably, the feedback control design proposed in this work remains completely data-driven and does not require any explicit knowledge of the original system. Furthermore, due to the bilinear structure of the Koopman model, seeking a CLF is no longer a bottleneck for LMPC. Benchmark numerical examples demonstrate the utility of the proposed feedback control design.
\end{abstract}

\begin{IEEEkeywords}
	Koopman operator, feedback stabilization, control Lyapunov functions, model predictive control
\end{IEEEkeywords}

% For peer review papers, you can put extra information on the cover
% page as needed:
 \ifCLASSOPTIONpeerreview
 \begin{center} \bfseries EDICS Category: 3-BBND \end{center}
 \fi
%
% For peerreview papers, this IEEEtran command inserts a page break and
% creates the second title. It will be ignored for other modes.
\IEEEpeerreviewmaketitle
%%%%%%%%%%%%%%%%%%%%%%%%%%%%%%%%%%%%%%%%%%%%%%%%%%%%%%%%%%%%%%%%%%%%%%%%%%%%%%%%
\section{Introduction}	
\IEEEPARstart{N}{onlinear} systems abound in nature, and providing a universal feedback design for stabilizing general nonlinear dynamics thus stands to have a significant impact on a broad range of applications. Yet, it remains a daunting challenge owing to the complexity of these nonlinear models. Often, powerful tools from differential geometry are required to fully resolve the complexities, making them computationally intractable. State-space models are one of the widely used ways to represent these dynamics, %but the nonlinear character precludes the system to be amenable to a systematic controller design. 
and several existing approaches that use the state-space description for nonlinear stabilizing control include optimization-based Sum of Squares (SoS) \cite{Parillo2000}, geometric-based feedback linearization \cite{Astolfi2015}, sliding mode control \cite{Utkin1993}, etc. An alternative to the state-space description of dynamical systems is the operator-theoretic description where we are interested in the evolution of observables (functions of states) and not the states themselves. Introduced by Koopman in \cite{Koopman1931}, the Koopman operator is one example which, when acted upon an observable, governs its evolution along the original system trajectory. Hence, the operator-theoretic description provides global insight into the system dynamics. This makes the Koopman operator a natural choice for data-driven analysis of dynamical systems and is appropriate for controller design.

Another feature of the Koopman operator approach that makes it extremely appealing for controller design is that it is a linear operator, although infinite dimensional, even when associated with nonlinear dynamics. Thus it extends the spectral analysis concepts of linear systems to the dynamics of observables in nonlinear systems. Specifically, the eigenvalues, eigenfunctions and invariant subspaces of the Koopman operator encode global information and provide valuable insights that allows future state prediction and scalable reconstruction of the underlying dynamics \cite{Budisic2012}. It has been shown in the literature that Koopman eigenfunctions are strongly connected to the geometric properties of the system \cite{Mauroy2012,Mauroy2013}, and are related to global linearization of the system \cite{Lan2013}. Recently, the connection between (existence of) specific Koopman eigenfunctions and the global
stability analysis of has been explored \cite{Mauroy2016}. Koopman eigenfunctions and the corresponding eigenvalues also facilitate the estimation of limit cycles as well as their basins of attraction \cite{Mezic2017}. Furthermore, the full state of the system can be projected onto the eigenfunctions of the Koopman operator using a linear combination of (Koopman) modes characterized by a fixed frequency and rate of decay. Therefore, dominant patterns of the underlying nonlinear system can be captured using these modes as useful coherent structures \cite{Mezic2005}. 

The practical implementation of operator theoretic control has been driven by the emergence of advanced numerical algorithms that approximate the Koopman spectrum from time-series data such as Dynamic Mode Decomposition \cite{Tu2014}, Extended Dynamic Mode Decomposition (EDMD) \cite{Williams2015}, Laplace analysis \cite{Mohr2014}, and machine learning \cite{Lusch2018,Li2017,Otto2019}. These developments have made Koopman operator theory an increasingly attractive approach for the analysis and control of nonlinear dynamical systems \cite{Korda2018a,Korda2018c,Arbabi2018,Hanke2018}. Although successfully implemented on a broad range of applications, the true potential of Koopman approach can only be realized by certifying that the controllers will guarantee closed-loop stability and robustness. Unlike systems characterized by unforced dynamics, providing stability analysis for forced (input dependent) systems has proven to be difficult because the predictive capability of the Koopman operator can be significantly impacted unless the role of actuation (i.e., the manipulated inputs) is appropriately accounted. To deal with this, \cite{Proctor2018} redefined the Koopman operator as a function of both states and the inputs. In \cite{Williams2016}, a modification of EDMD was presented that compensates for the effect of inputs. In \cite{Surana2016b}, a bilinear representation was provided in the Koopman space that is tight and theoretically justified. Using this representation, the authors in \cite{Huang2018} proposed a stabilizing feedback controller which relies on control Lyapunov function (CLF) and thus achieves stabilization of the bilinear system. 

However, the method in \cite{Huang2018} neither solved an optimal control problem nor accounted for explicit state and input constraints. Moreover, it did not comment on the stability analysis of the original nonlinear system under the implementation of the designed controller in the Koopman space. To address this, CLFs were employed in \cite{Narasingam2019} where a feedback controller was designed for the Koopman space (i.e., lifted domain) using Lyapunov constraints within a model predictive control (MPC) formulation. Such a design allowed for an explicit characterization of stability properties of the original nonlinear system. In addition, the linear structure of the Koopman models was exploited to transform the original nonlinear MPC problem to a convex quadratic MPC problem that is computationally attractive. However, the limitation of the method presented in \cite{Narasingam2019} is that the CLF was derived for the original system which requires an explicit mathematical expression of the original nonlinear dynamics; it is particularly challenging when we have limited a priori knowledge of the original nonlinear system. Additionally, even though we have a good understanding of the nonlinear system, it is in practice computationally demanding to determine its corresponding CLFs.  

To address these issues, this work seeks to propose a stabilizing feedback controller based on the Koopman bilinear representation of the original nonlinear system. To do so, first, Koopman system identification is applied to derive a bilinear representation of the dynamics. Then, a CLF is determined for the bilinear system in the Koopman eigenfunction space which is employed in the Lyapunov-based MPC (LMPC) formulation. Then, a stability criterion is presented that guarantees stability of the original closed-loop system in the $\epsilon-\delta$ sense based on stability of the Koopman bilinear system. Unlike \cite{Narasingam2019}, the feedback control design proposed in this work is completely data-driven and does not require any a priori knowledge of the original system. Moreover, deriving CLFs for the Koopman bilinear system is much more computationally affordable than the original nonlinear system. In fact, the search for CLFs can be focused on a class of quadratic functions which are known to effectively characterize the stability region of simpler systems like the (Koopman) bilinear systems.     
       
\emph{Organization}: Section \ref{Preliminaries} contains definitions of mathematical concepts of interest and describes the Koopman system identification method. In Section \ref{KLMPC}, we present our stabilizing controller formulation, utilizing the identified model, based on Lyapunov-based MPC and prove stability of the closed-loop system under the implementation of the proposed controller. Section \ref{Results} illustrates the application to numerical examples and the performance of the controller is studied. Section \ref{Conclusions} provides concluding remarks and discusses future work

\section{Preliminaries}\label{Preliminaries}
In this section, we provide background on the Koopman operator and its relation to forced dynamical systems. Subsequently, we present a system identification method over Koopman observables, which yields a practical training procedure for embedding nonlinear systems to a bilinear model from data.

\subsection{Koopman Operator}
Let $\mathbf{x} \in \mathcal{X} \subseteq \mathbb{R}^n$ be the vector of state variables  of a continuous-time nonlinear dynamical system whose evolution is governed by the function
\begin{equation}\label{eq1}
\dot{\mathbf{x}} = \mathbf{F}(\mathbf{x})
\end{equation}
where $\mathbf{F} : \mathcal{X} \to \mathcal{X}$ is the nonlinear operator that maps the system states forward in time. It is assumed that the vector field $\mathbf{F}$ is continuously differentiable. The solution to \eqref{eq1} is given by the flow field $\mathbf{\Phi}^t(\mathbf{x})$. Typically, an analytic form for $\mathbf{\Phi}^t(\mathbf{x})$ is impossible to determine and we resort to numerical solutions for \eqref{eq1}, which can become computationally intractable.  

Now, let $\mathcal{G}$ be a Hilbert space of complex-valued functions on $\mathcal{X}$. The elements of $\mathcal{G}$ are often called \emph{observables}  as they may correspond to measurements taken during an experiment or the output of
a simulation. In his seminal work, Koopman realized an alternative description of (\ref{eq1}) in terms of the evolution of these observables denoted as $g(\mathbf{x})$ with $g:  \mathcal{X} \to \mathbb{C}$. Specifically, Koopman theory asserts that the nonlinear system in \eqref{eq1} can be mapped to a \emph{linear} system using an infinite dimensional linear operator $\mathcal{K}^t$ that advances these observables forward in time. 
\begin{dfn}[Koopman operator]
	For a given space $\mathcal{G}$ of observables, the Koopman (semi)group of operators $ \mathcal{K}^t : \mathcal{G} \to \mathcal{G}$ associated with system (\ref{eq1}) is defined by
	\begin{equation}\label{eq2}
	 [\mathcal{K}^tg](\mathbf{x}) = g \circ \mathbf{\Phi}^t(\mathbf{x})
	\end{equation}
\end{dfn}
By definition, the Koopman operator is linear even though the underlying dynamical system is nonlinear, i.e., it satisfies
\begin{equation}\label{eq3}
[\mathcal{K}^t(\alpha g_1 + \beta g_2)](\mathbf{x}) = \alpha[\mathcal{K}^tg_1](\mathbf{x}) + \beta[\mathcal{K}^tg_2](\mathbf{x})
\end{equation}
The linearity of the Koopman operator allows it to be characterized by its eigenvalues and eigenfunctions. An eigenfunction $\psi \in \mathcal{G} : \mathcal{X} \to \mathbb{C}$ of the Koopman operator is defined to satisfy
\begin{equation}\label{eq4}
\begin{aligned}
& [\mathcal{K}^t\psi](\mathbf{x}) = e^{\lambda t} \psi(\mathbf{x}) \\
& \frac{d}{dt}\psi(\mathbf{x}) = \lambda \psi(\mathbf{x})
\end{aligned}
\end{equation} 
where $\lambda \in \mathbb{C}$ is the associated eigenvalue. These eigenfunctions can be used to predict the time evolution of an observable, in relation with the state dynamics, as long as the given observable lies within the span of these eigenfunctions. Applying chain rule to \eqref{eq4}, 

\begin{equation}\label{eq5}
\begin{aligned}
& \frac{d}{dt}\psi(\mathbf{x}) =  \nabla \psi(\mathbf{x}) \cdot \mathbf{F}(\mathbf{x}) \triangleq L_\mathbf{F} \psi(\mathbf{x}) = \lambda \psi(\mathbf{x})
\end{aligned}
\end{equation}
where the Lie derivative with respect to the vector field $\mathbf{F}$, denoted as $L_\mathbf{F} = \mathbf{F} \cdot \nabla$, is the infinitesimal generator of the Koopman operator $\mathcal{K}^t$, i.e., $\lim\limits_{t \to 0}(\mathcal{K}^t - I)/t$. Hence, the time varying observable $\tilde{g}(t,\mathbf{x}) = \mathcal{K}^tg(\mathbf{x})$ can be obtained as a solution to the partial differential equation,

\begin{equation}\label{eq6}
\begin{aligned}
& \frac{\partial}{\partial t}\tilde{g} = \mathbf{F} \cdot \nabla \tilde{g} \triangleq L_\mathbf{F}\tilde{g} \\
& \tilde{g}(0,\mathbf{x}) = g(\mathbf{x})
\end{aligned}
\end{equation} 

Any finite subset of the Koopman eigenfunctions naturally forms an invariant subspace and discovering these eigenfunctions enables globally linear representations of strongly nonlinear systems.

\subsection{Modeling forced dynamics}\label{forced}
The Koopman operator theory has been conceptually developed for uncontrolled systems. To adopt it for the purposes of control, consider a control affine system as follows:
\begin{equation}\label{eq7}
\dot{\mathbf{x}} = \mathbf{F}(\mathbf{x}) + \sum_{i=1}^{m}\mathbf{G}_i(\mathbf{x})u_i
\end{equation}
where $\mathbf{x} \in \mathcal{X} \subseteq \mathbb{R}^n$, $u_i \in \mathcal{U}$ for $i=1,\cdots,m$, and $\mathbf{G}_i : \mathcal{X} \to \mathcal{X}$ denotes the control vector fields that dictate the effect of input on the system. It is assumed that the vector fields are locally Lipschitz continuous. This is a reasonable assumption which holds true for many physical systems. The evolution of the observable functions for the controlled system of \eqref{eq7} is given, by applying chain rule similar to \eqref{eq6}, as  

\begin{equation}\label{eq8}
\begin{aligned}
& \frac{\partial}{\partial t}\tilde{g} = L_\mathbf{F}\tilde{g} + \sum_{i=1}^{m}u_iL_{\mathbf{G}_i}\tilde{g} \\
& \tilde{g}(0,\mathbf{x}) = g(\mathbf{x})
\end{aligned}
\end{equation} 
where $L_\mathbf{F}$ and $L_{\mathbf{G}_i}$ denote the Lie derivatives with respect to the vector fields $\mathbf{F}$ and $\mathbf{G}_i$ for $i = 1,\cdots,m$, respectively. The system \eqref{eq8} is analogous to a bilinear system except for the fact that the operators $L_\mathbf{F}$ and $L_{\mathbf{G}_i}$ are infinite dimensional, operating on the function space $\mathcal{G}$. However, if there exist a finite number of observable functions $\bar{g}_1,\cdots, \bar{g}_N$ that span a subspace $\bar{\mathcal{G}} \subset \mathcal{G}$ such that $\mathcal{K}\bar{g} \in \bar{\mathcal{G}}$ for any $\bar{g} \in \bar{\mathcal{G}}$, then $\bar{\mathcal{G}}$ is said to be an invariant subspace and the Koopman operator becomes a finite-dimensional matrix, $\mathbf{K}$. For practical implementation, the Koopman eigenfunctions can be used for $\bar{g}$ such that the finite-dimensional approximation can be determined by projecting on the subspace spanned by these eigenfunctions \cite{Goswami2017}. The choice of using eigenfunctions as basis is intuitive because an action of the infinitesimal generator of the Koopman operator on these eigenfunctions is dictated simply by a scalar, i.e., the corresponding eigenvalue (see \eqref{eq5}).

\subsection{Koopman bilinear system identification }\label{EDMD}
To obtain a bilinear form of system \eqref{eq7} in the Koopman eigenfunction coordinates, we use the Koopman canonical transform (KCT) \cite{Surana2016b}. Such a transformation is given by

\begin{equation}\label{eq9}
\begin{aligned}
 \mathbf{z} = \boldsymbol{\Psi}(\mathbf{x}) =& ~[\psi_1(\mathbf{x}),\cdots,\psi_N(\mathbf{x})]^T, \text{where}\\
% \text{where} \\
 \psi_j(\mathbf{x}) =& ~\tilde{\psi}_j(\mathbf{x}),  \text{if} ~\tilde{\psi}_j : \mathcal{X} \to \mathbb{R} \\
 [\psi_j(\mathbf{x}),~\psi_{j+1}(\mathbf{x})]^T =& ~[2Re(\tilde{\psi}_j(\mathbf{x})), ~-2Im(\tilde{\psi}_j(\mathbf{x}))]^T, \\
&  ~\text{if} ~\tilde{\psi}_j, ~\tilde{\psi}_{j+1} : \mathcal{X} \to \mathbb{C} \\
& ~\text{and assuming} ~ \tilde{\psi}_{j+1} = { \tilde{\psi}_j^\star }
\end{aligned}
\end{equation}
where $^\star $ denotes the complex conjugate. 
Applying the above transformation to \eqref{eq7} yields
\begin{equation}\label{eq10}
\dot{\mathbf{z}} = \Lambda \mathbf{z} + \sum_{i=1}^{m}u_iL_{\mathbf{G}_i}\mathbf{\Psi}
\end{equation}
where $\Lambda$ is a block-diagonal matrix constructed using the Koopman eigenvalues $\lambda_j, j = 1,\cdots, N$, which are corresponding to the Koopman eigenfunctions shown in \eqref{eq9}, i.e., 
\begin{equation}\label{eq12}
\begin{aligned}
\Lambda_{j,j} &= \lambda_j, \hspace{0.5cm} \text{if} ~ ~\tilde{\psi}_j : \mathcal{X} \to \mathbb{R} \\
\begin{bmatrix}
\Lambda_{j,j} & \Lambda_{j,j+1} \\
\Lambda_{j+1,j} & \Lambda_{j+1,j+1} \\
\end{bmatrix} &= |\lambda_j|\begin{bmatrix}
cos(\angle\lambda_j) & sin(\angle\lambda_j)\\
-sin(\angle\lambda_j) & cos(\angle\lambda_j)
\end{bmatrix},\\
& \hspace{1.5cm} \text{if} ~ \tilde{\psi}_{j} ,\tilde{\psi}_{j+1} : \mathcal{X} \to \mathbb{C} \\
& \hspace{1.5cm}\text{and assuming} ~ \tilde{\psi}_{j+1} = { \tilde{\psi}_j^\star }
\end{aligned}
\end{equation}
\begin{asm}\label{asm1}
$\exists ~\psi_j, j = 1,\cdots, N$ such that \\
\begin{equation*}
L_{\mathbf{G}_i}\mathbf{\Psi} = \sum_{j = 1}^{N} b^{\mathbf{G}_i}_j\psi_j(\mathbf{x}) = B_i\boldsymbol{\Psi}
\end{equation*}
\end{asm}
where $b^{\mathbf{G}_i}_j \in \mathbb{R}^n$ and $\psi_j(\mathbf{x})$ are defined in \eqref{eq9}. In other words, it is assumed that $L_{\mathbf{G}_i}\boldsymbol{\Psi}$ lies in the span of the eigenfunctions $\psi_j, ~j = 1,\cdots,N$ so that it can be represented using a constant matrix, $B_i \in \mathbb{R}^{N \times N}$.

Based on this assumption, the system \eqref{eq10} becomes the following bilinear control system in the Koopman space,
\begin{equation}\label{eq11}
\dot{\mathbf{z}} = \Lambda\mathbf{z} + \sum_{i=1}^{m}u_iB_i\mathbf{z}
\end{equation}

The objective of the system identification method is to determine the continuous bilinear system of \eqref{eq11} using time-series data generated by the controlled dynamical system of \eqref{eq7}. This is done in two parts. First, we calculate the system matrix $\Lambda$ using the eigenfunctions of the Koopman operator for the uncontrolled part of \eqref{eq7}. Although there are several  methods available in the literature that can achieve this, the EDMD algorithm is utilized in this work. The algorithm is detailed below. \\
\emph{Calculating $\Lambda$}:
\begin{enumerate}
	\item The time-series data of $N_t$ snapshot pairs satisfying the dynamical system of (\ref{eq1}) are generated and organized in the following matrices:
	\begin{equation}\label{eq13}
	\begin{aligned}
	&	\mathbf{X} = [\mathbf{x}_1,\mathbf{x}_2,\cdots,\mathbf{x}_{N_t}],  \quad \mathbf{Y} = [\mathbf{y}_1,\mathbf{y}_2,\cdots,\mathbf{y}_{N_t}], \\
	\end{aligned}
	\end{equation}
	where $\mathbf{x}_k \in \mathcal{X}$, $\mathbf{y}_k = \mathbf{F}(\mathbf{x}_k)\Delta t + \mathbf{x}_k \in \mathcal{X}$ and $\Delta t$ is the discretization time. Note $\mathbf{y}_k$ is used here instead of $\mathbf{x}_{k+1}$ because the data need not necessarily be temporally ordered as long as the corresponding pairs $(\mathbf{x}_k,\mathbf{y}_k)$ are obtained as shown above.
	\item  A library of nonlinear observable functions $\mathcal{D} = \{\phi_1,\phi_2,\ldots,\phi_N\}$ is selected to define the vector-valued function $\boldsymbol{\phi}: \mathcal{X} \to \mathbb{R}^N$ 
	\begin{equation}\label{eq14}
	\boldsymbol{\phi}(\mathbf{x}) = [\phi_1(\mathbf{x}),\phi_2(\mathbf{x}),\cdots,\phi_{N}(\mathbf{x})]^T
	\end{equation}
	where $\boldsymbol{\phi}$ is used to lift the system from a state space to a function space of observables. 
 \item A least-squares problem is solved over all the data samples to obtain $\mathbf{K} \in \mathbb{R}^{N \times N}$ which is the transpose of the finite dimensional approximation to the Koopman operator, $\mathcal{K}^t$:
\begin{equation}\label{eq15}
\underset{\mathbf{K}}{\text{min}} \quad \sum_{i = 1}^{N_t}\| \boldsymbol{\phi}(\mathbf{y}_i)-\mathbf{K}\boldsymbol{\phi}(\mathbf{x}_i)\|_2^2
\end{equation}
The  value of $\mathbf{K}$ that minimizes (\ref{eq15}) can be determined analytically as:
\begin{equation}\label{eq16}
\mathbf{K} = \boldsymbol{\phi}_{XY}\boldsymbol{\phi}_{XX}^{\dagger}
\end{equation}
where $^{\dagger}$ denotes the pseudo inverse, and the data matrices are given by 
\begin{equation}\label{eq17}
\begin{aligned}
& \boldsymbol{\phi}_{XX} = \boldsymbol{\phi}_X \boldsymbol{\phi}_X^T , \quad \boldsymbol{\phi}_{XY} = \boldsymbol{\phi}_Y \boldsymbol{\phi}_X^T \\
\end{aligned}
\end{equation}
where
\begin{equation*}
\begin{aligned}
& \boldsymbol{\phi}_X = [\boldsymbol{\phi}(\mathbf{x}_1),\cdots,\boldsymbol{\phi}(\mathbf{x}_{N_t})], \\
&  \boldsymbol{\phi}_Y = [\boldsymbol{\phi}(\mathbf{y}_1),\cdots,\boldsymbol{\phi}(\mathbf{y}_{N_t})]
\end{aligned}
\end{equation*}
%with 
%\begin{equation*}
%\boldsymbol{\phi}(\mathbf{x}_k)=\begin{bmatrix}
%\phi_1(\mathbf{x}_k) \\
%\vdots \\
%\phi_N(\mathbf{x}_k)
%\end{bmatrix}
%\end{equation*}
It has been previously shown that the matrix $\mathbf{K}$ asymptotically approaches the Koopman operator as we increase $N_t$ \cite{Korda2018b}, and hence approximates the evolution of observables.
\item An eigendecomposition of $\mathbf{K}$ is performed to determine the eigenvalues $\tilde{\lambda}_j$ and eigenvectors $e_j$ for $j = 1,\cdots, N$.
\item The eigenvalues are converted to continuous time as $\lambda_j = \text{log}(\tilde{\lambda}_j)/\Delta t$, and the eigenfunctions, $\psi_j$, are computed, using  $\tilde{\psi}_j = \boldsymbol{\phi}^Te_j$, according to the procedure described in \eqref{eq9}.
\item The system matrix $\Lambda$ is constructed using the block-diagonalization described in \eqref{eq12}.
\end{enumerate}
\emph{Calculating control matrix $B_i$}:\\
In the next step, the control matrix $B_i$ is calculated using the eigenfunctions. Specifically using \emph{Assumption \ref{asm1}} and the fact that $\boldsymbol{\Psi}(\mathbf{x}) = E^T\boldsymbol{\phi}(\mathbf{x})$ where $E = [e_1,\cdots,e_N]$ is the matrix containing the eigenvectors, we have 
\begin{equation}\label{eq18}
\begin{aligned}
B_i\boldsymbol{\Psi}(\mathbf{x})  &= L_{\mathbf{G}_i}\boldsymbol{\Psi}(\mathbf{x}) \\
&= L_{\mathbf{G}_i}(E^T\boldsymbol{\phi}(\mathbf{x})) = E^T\frac{\partial\boldsymbol{\phi}}{\partial \mathbf{x}}\mathbf{G}_i(\mathbf{x})
\end{aligned}
\end{equation}
The control matrix $B_i$ can be obtained by equating the coefficients of right and left hand side functions of the above equation. Once the system matrices $\Lambda$ and $B_i$ are determined, a bilinear system of \eqref{eq11} can be constructed using the Koopman eigenfunctions and can be used for the task of designing feedback controllers.

\section{Koopman Lyapunov-based MPC} \label{KLMPC}
In this section, we detail how Koopman operator theory can be integrated with Lyapunov-based predictive control scheme to stabilize the system of \eqref{eq7}
\subsection{Lyapunov-based predictive control}

For simplicity, let us consider the control affine system of \eqref{eq7} with $i=1$, i.e., a single input. All the results can be generalized to the case of multiple inputs. Without loss of generality, we assume $\mathbf{F}({0}) = {0}$ and that the origin is an unstable equilibrium point of the uncontrolled system. Then, the closed-loop stabilization problem associated with \eqref{eq7} seeks a state-dependent control law of the form $u=h(\mathbf{x}), h : \mathbb{R}^n \to \mathbb{R}$ which renders the origin stable within some domain $\mathscr{D} \subset \mathbb{R}^n$ for the closed-loop form of \eqref{eq7}. 

One of the widely used approaches to design state feedback controllers is via the use of  CLFs as they facilitate explicit consideration of the stability prior to the controller design. CLF is a continuously differentiable positive definite function $V: \mathscr{D} \to \mathbb{R}_+$ such that for all $\mathbf{x} \in \mathscr{D}/\{0\}$, $\dot{V} := L_\mathbf{F}V + uL_\mathbf{G}V < 0$. Once  this  CLF  is  constructed,  design  of  a  feedback  law  can  be straightforward  \cite{Sontag1989}.

LMPC is a powerful tool that uses CLFs for the design of an optimal stabilizing feedback controller for nonlinear dynamical systems \cite{Mhaskar2005}, particularly those characterized by a set of constraints. Essentially, LMPC is a control strategy that possesses all the advantages of a standard MPC and is designed based on an explicit, stable (albeit not optimal) control law ${h}(\cdot)$. By explicitly adding a Lyapunov constraint to a standard MPC formulation, the controller is able to stabilize the closed-loop system.  Additionally, LMPC explicitly characterizes a set of initial conditions starting from where the closed–loop stability is guaranteed. Hence, it ensures stability irrespective of the prediction horizon, i.e., the computational time can be made smaller by decreasing the prediction horizon (reducing the size of the optimization problem). However, the main bottleneck to the success of this method lies in the construction  of CLFs for a general nonlinear system. To avoid this, in the proposed method, the system of \eqref{eq7} is first transformed into a bilinear control system of \eqref{eq11}, using the procedure described above, for which determining a CLF is much easier. Particularly, the search for a CLF of a bilinear system can now be limited to the class of quadratic functions and an optimization problem can be solved to determine the required CLF \cite{Huang2018}. Then, one can apply LMPC in the Koopman eigenspace to determine a stabilizing input for the bilinear system of \eqref{eq11}. 

\subsection{Bounded explicit control $h(\mathbf{z})$}
Let us consider the Koopman bilinear system of \eqref{eq11} with $i=1$, i.e., a single input obtained using the system identification method described in Section \ref{EDMD}. This system is assumed to be stabilizable, which implies the existence of a feedback control law ${u}(t) = {h}(\mathbf{z}(t))$ that satisfies input constraints for all $\mathbf{z}$ inside a given stability region and renders the origin of the closed-loop system asymptotically stable. This is equivalent to assuming that there exists a CLF for the system of \eqref{eq11}. Due to the bilinear structure of the system, the CLF can be limited to a class of quadratic functions, i.e., $V(\mathbf{z}) = \mathbf{z}^TP\mathbf{z}$. The necessary and sufficient conditions for the symmetric positive definite matrix $P$ such that the system of \eqref{eq11} is stabilizable are provided in \cite{Huang2018}. The theorem is stated below. 

\begin{pro}[see \cite{Huang2018}, Theorem 2]\label{pro1}
	The bilinear system of \eqref{eq11} is stabilizable if and only if there exists an $N\times N$ symmetric positive definite matrix $P$ such that for all $\mathbf{z} \neq 0 \in \mathbb{R}^N$ with $\mathbf{z}^T(P\Lambda + \Lambda^TP)\mathbf{z} \geq 0$, we have $\mathbf{z}^T(PB + B^TP)\mathbf{z} \neq 0$.
\end{pro}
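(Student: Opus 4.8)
The plan is to read the stated condition as an Artstein/Sontag-type control-Lyapunov-function (CLF) requirement for the quadratic candidate $V(\mathbf{z}) = \mathbf{z}^T P \mathbf{z}$ with $P = P^T > 0$, and then show that this requirement collapses exactly to the advertised sign condition. Differentiating $V$ along trajectories of the single-input bilinear system $\dot{\mathbf{z}} = \Lambda\mathbf{z} + uB\mathbf{z}$ gives
\begin{equation*}
\dot{V} = \mathbf{z}^T(P\Lambda + \Lambda^T P)\mathbf{z} + u\,\mathbf{z}^T(PB + B^T P)\mathbf{z} \triangleq a(\mathbf{z}) + u\,b(\mathbf{z}),
\end{equation*}
so that $\dot V$ is affine in $u$ at each fixed $\mathbf{z}$. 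The system admits $V$ as a quadratic CLF precisely when, for every $\mathbf{z} \neq 0$, there is an admissible $u$ making $\dot V < 0$.

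For sufficiency I would argue directly from this affine structure. Fix $\mathbf{z} \neq 0$. If $b(\mathbf{z}) \neq 0$, then $u$ may be chosen with sign opposite to $b(\mathbf{z})$ and magnitude large enough to drive $a(\mathbf{z}) + u\,b(\mathbf{z})$ below zero; note that the required magnitude is governed by the ratio $a(\mathbf{z})/b(\mathbf{z})$, which is homogeneous of degree zero and hence bounded, so the input constraints can be respected on a suitable sublevel set. If instead $b(\mathbf{z}) = 0$, then $\dot V = a(\mathbf{z})$ independent of $u$, and the hypothesis — the contrapositive of which reads $b(\mathbf{z}) = 0 \Rightarrow a(\mathbf{z}) < 0$ — guarantees $\dot V < 0$. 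Hence $\min_u \dot V < 0$ for all $\mathbf{z} \neq 0$, so $V$ is a CLF, and invoking Sontag's universal formula \cite{Sontag1989} or the bounded explicit controller $h(\mathbf{z})$ of this section yields an explicit stabilizing feedback, establishing stabilizability.

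For necessity I would begin from a stabilizing feedback and recover such a $P$. The crucial ingredient is the homogeneity of the bilinear dynamics: the closed loop under any degree-zero homogeneous feedback $h(c\mathbf{z}) = h(\mathbf{z})$ is homogeneous of degree one, so trajectories scale as $\mathbf{z}(0) = c\,\mathbf{z}_0 \Rightarrow \mathbf{z}(t) = c\,\mathbf{z}(t;\mathbf{z}_0)$. This scaling symmetry is what permits a converse-Lyapunov construction yielding a Lyapunov function homogeneous of degree two, i.e. a quadratic form $\mathbf{z}^T P \mathbf{z}$ with $P = P^T > 0$. With such a $P$ in hand, the CLF property $\min_u \dot V < 0$ reads $a(\mathbf{z}) < 0$ whenever $b(\mathbf{z}) = 0$, which is exactly the stated condition in contrapositive form, closing the equivalence.

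The main obstacle is the necessity direction, and specifically the reduction to a \emph{quadratic} (rather than a general) Lyapunov function. Here the bilinear/homogeneous structure must be exploited with care: one must show that if any stabilizing feedback exists then a degree-zero homogeneous one exists (for instance by restricting to the unit sphere and extending radially), and that the associated converse Lyapunov function can be taken smooth and homogeneous of degree two. The derivative computation and the sufficiency direction are routine; I would concentrate the effort on justifying this quadratic reduction and on confirming that the strict inequality survives the boundary case $a(\mathbf{z}) = 0$, which is precisely why the hypothesis is phrased with the closed set $\{\mathbf{z}: a(\mathbf{z}) \ge 0\}$ rather than the open one.
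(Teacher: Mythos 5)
First, note that the paper does not actually prove this proposition: it is imported verbatim from \cite{Huang2018} (Theorem~2 there), and the only in-text justification offered is the one-sentence reading of $\dot V = \mathbf{z}^T(P\Lambda+\Lambda^TP)\mathbf{z} + u\,\mathbf{z}^T(PB+B^TP)\mathbf{z}$ as a function affine in $u$. Your sufficiency argument is exactly that reading, carried out correctly: the stated condition is the contrapositive of ``$L_BV=0 \Rightarrow L_\Lambda V<0$'', so $V=\mathbf{z}^TP\mathbf{z}$ is a CLF and Sontag's formula \eqref{eq19} produces a stabilizing feedback. One small repair is needed: your claim that $a(\mathbf{z})/b(\mathbf{z})$ is ``homogeneous of degree zero and hence bounded'' is not a valid inference, since a degree-zero homogeneous function restricted to the unit sphere can still blow up as $b\to 0$. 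The correct argument is that the set $\{\|\mathbf{z}\|=1,\ a(\mathbf{z})\ge 0\}$ is compact and, by hypothesis, $b$ does not vanish on it, so $|b|$ is bounded away from zero exactly where control effort is actually needed, and the ratio is bounded there.

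The genuine gap is the necessity direction, which you have correctly located but not closed. Converse Lyapunov constructions for homogeneous closed-loop systems deliver a Lyapunov function homogeneous of degree two, but ``homogeneous of degree two'' is strictly weaker than ``a quadratic form $\mathbf{z}^TP\mathbf{z}$'': for instance $\bigl(\sum_i z_i^4\bigr)^{1/2}$ is positive definite, $2$-homogeneous and smooth away from the origin, yet is not $\mathbf{z}^TP\mathbf{z}$ for any symmetric $P$. So even granting your reduction to a degree-zero homogeneous feedback, the construction yields a homogeneous CLF, not a quadratic one, and the existence of a symmetric positive definite $P$ as asserted does not follow from your sketch. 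This is not a defect you introduced so much as the reason the ``only if'' half is nontrivial; whatever additional structure or weaker notion of stabilizability makes the quadratic reduction legitimate must come from \cite{Huang2018} itself. For the purposes of this paper only the sufficiency direction is used --- $P$ is obtained numerically from \eqref{eq40} and fed into \eqref{eq19} --- so the unproved necessity does not affect any downstream result.
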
  

In other words, for $\dot{V}(\mathbf{z}) = \mathbf{z}^T(P\Lambda + \Lambda^TP)\mathbf{z} + u(\mathbf{z}^T(PB + B^TP)\mathbf{z})$ to be negative, given that the first term on the right hand side is positive, then the second term cannot be zero so that the control action $u$ can render $\dot{V} < 0$. Once the conditions of \emph{Proposition \ref{pro1}} are satisfied, one way to determine the explicit control law $h(\mathbf{z})$, required to stabilize the bilinear system, is provided by the following formula by Sontag \cite{Lin1991}:

\begin{equation}\label{eq19}
\begin{aligned}
& b(\mathbf{z}) = 
\begin{cases}
-\frac{L_\Lambda V + \sqrt{L_\Lambda V^2 + L_BV^4}}{L_BV}, & \text{if} ~~ L_BV \neq 0 \\[5pt]
0, & \text{if} ~~L_BV = 0
\end{cases} \\[5pt]
& h(\mathbf{z}) = \begin{cases}
u_{min} , & \text{if} ~~ b(\mathbf{z}) < u_{min} \\[5pt]
b(\mathbf{z}), & \text{if} ~~ u_{min} \leq b(\mathbf{z}) \leq u_{max} \\[5pt]
u_{max}, & \text{if}~~ b(\mathbf{z}) > u_{max}
\end{cases}
\end{aligned}
\end{equation}
where $L_\Lambda V = \mathbf{z}^T(P\Lambda + \Lambda^TP)\mathbf{z}$, $L_BV = \mathbf{z}^T(PB + B^TP)\mathbf{z}$, and $h(\mathbf{z})$ represents the saturated control law that accounts for the input constraints $u_{min} \leq u(t) \leq u_{max} \in \mathcal{U}$. For the above controller, one can show, using a standard Lyapunov argument, that if the closed-loop state evolves within a level set of $V$, the time-derivative of the CLF is negative definite ensuring asymptotic stability. Let the largest level set of $V$ be given by

\begin{equation}\label{eq20}
\Omega_r = \{\mathbf{z} \in \mathbb{R}^N : V(\mathbf{z}) \leq r\}
\end{equation} 
where $r$ is the largest number for which $\Omega_r \subseteq \Omega$. $\Omega$ is the complete stability region, starting from which the origin of the bilinear system under \eqref{eq19} is guaranteed to be stable. In practice, the entire region of attraction, $\Omega$, is very difficult to estimate even for simple systems.
 
\subsection{Koopman Lypaunov-based predictive control}
Now that we have the explicit control law, the idea is to stabilize the bilinear system using the Lyapunov-based predictive control scheme as below:
\begin{subequations}\label{eq21}
	\begin{align}
	& \underset{{u}\in \mathcal{S}(\Delta)}{\text{min}} 
	&& \int_{t_k}^{t_{k}+N_p\Delta} [\mathbf{z}^T(\tau)W\mathbf{z}(\tau) + {u}^T(\tau)Ru(\tau)]d\tau, \label{eq21a}\\[7pt]
	&\text{s.t}
	&& \dot{{\mathbf{z}}}(t) = \Lambda\mathbf{z}(t) + u(t){B}\mathbf{z}(t)\label{eq21b}\\
	&&&{\mathbf{z}}(t_k) = \boldsymbol{\Psi}(\mathbf{x}(t_k)) \label{eq21c}\\
	&&& u_{min} \leq u(t) \leq u_{max}, ~\forall t \in [t_k,t_{k}+N_p\Delta) \label{eq21d}\\
	&&& V(\mathbf{z}(t)) \leq \hat{r}, \qquad \qquad \forall t\in[t_k,t_{k}+N_p\Delta] \notag\\
	&&& \hspace{3cm} \text{if} ~\mathbf{x}(t_k) \in \Omega_{\hat{r}} \label{eq21e}\\
   &&& \dot{V}(\mathbf{z}(t_k),\mathbf{u}(t_k)) \leq \dot{V}(\mathbf{z}(t_k),{h}(\mathbf{z}(t_k))), \notag\\
   &&& \hspace{3cm} \text{if} ~\mathbf{x}(t_k) \in \Omega_r/ \Omega_{\hat{r}}\label{eq21f}
	\end{align}
\end{subequations}  
where $\mathcal{S}(\Delta)$ is the family of piece-wise constant functions with sampling period $\Delta$,  $N_p$ is the prediction horizon, and $W \in \mathbb{R}^{N \times N}$ and $R \in \mathbb{R}$ are positive definite weighting matrices. The manipulated input (solution to the optimization problem) of the above system under the LMPC control law is defined as 
\begin{equation}\label{eq}
{u} = {u}^\star (t|t_k) , \qquad \forall t\in[t_k,t_{k}+N_p\Delta)
\end{equation}
where ${u}^\star(t|t_k) = [{u}^\star(t_k),\cdots,{u}^\star(t_{k}+N_p\Delta)]$. The first value of ${u}^\star(t|t_k)$ is applied to the closed-loop system for the next sampling time period $t \in [t_k,t_k + \Delta)$ and the procedure is repeated until the end of operation. 

In the LMPC formulation of (\ref{eq21a}-\ref{eq21c}), (\ref{eq21a}) denotes a performance index that is to be minimized, (\ref{eq21b}) is the Koopman bilinear model of the system of (\ref{eq11}) used to predict the future evolution of the states, and (\ref{eq21c}) provides the initial condition which is obtained as a transformation of the actual state measurement. In addition to these constraints, the LMPC formulation considers two Lyapunov constraints, \eqref{eq21e} and \eqref{eq21f}. In the design of LMPC, one important factor we need to consider is the sample-and-hold implementation of the control law. To explicitly deal with the sampled system, we consider a region $\Omega_{\hat{r}}$, where $\hat{r} < r$.  Specifically, when $\mathbf{z}(t_k)$ is received at a sampling time $t_k$, if $\mathbf{z}(t_k)$ is within the region $\Omega_{\hat{r}}$, the LMPC minimizes the cost function within the region $\Omega_{\hat{r}}$; however, if $\mathbf{z}(t_k)$ is in the region $\Omega_r/ \Omega_{\hat{r}}$, i.e., $\mathbf{z}(t_k) \in \Omega_r$ but $\mathbf{z}(t_k) \notin \Omega_{\hat{r}}$, the LMPC first drives
the system state to the region $\Omega_{\hat{r}}$ and then minimizes the cost function within $\Omega_{\hat{r}}$. In other words, due to the sample-and-hold implementation of the control law, the region $\Omega_{\hat{r}} \subset \Omega_{r}$ is chosen as a `safe' zone to make $\Omega_{r}$ invariant. Please note that this is not a limitation of the LMPC formulation but of the discrete-time implementation of the control action to a continuous-time dynamical system. Ultimately, the size of the safe set $\Omega_{\hat{r}}$ depends on the hold time (i.e., sampling time), $\Delta$ (details given below in \emph{Proposition \ref{pro2}}).

Therefore, (\ref{eq21e}) is only active when $\mathbf{z}(t_k) \in \Omega_{\hat{r}}$ and ensures that the sampled state is maintained in the region $\Omega_{\hat{r}}$ (so that the actual state of the closed-loop system is in the stability region $\Omega_{r}$).  The constraint (\ref{eq21f}) is only active when $\hat{r} < V(\mathbf{z}(t_k)) \leq r$ and ensures the rate of change of the Lyapunov function is smaller than or equal to that of the value obtained if the explicit control law ${h}(\mathbf{z})$ is applied to the closed-loop system in a sample-and-hold fashion. These constraints allow the LMPC controller to inherit the stability properties of $h(\mathbf{z})$, i.e., it possesses at least the same stability region $\Omega_r$ as the controller ${h}(\mathbf{z})$. This implies that the (equilibrium point of) closed-loop system of \eqref{eq21a}-\eqref{eq21f} is guaranteed to be stable for any initial state inside the region $\Omega_r$ provided that the sampling time $\Delta$ is sufficiently small. Note that because of this property, the LMPC does not require a terminal constraint used in a traditional MPC setting. Additionally, the feasibility of \eqref{eq21a}-\eqref{eq21f} is guaranteed because ${u} = {h}(\mathbf{z})$ is always a feasible solution to the above optimization problem. Even though the above formulation does not explicitly consider the state constraints, they can be readily incorporated. 

\begin{pro}\label{pro2}
	Consider the system of (\ref{eq11}) under the MPC control law of (\ref{eq21a})-(\ref{eq21f}),   which is designed using a CLF, $V$, that has a stability region $\Omega_r$ under continuous implementation of the explicit controller ${h}(\mathbf{z})$. Then, given any  positive real number $d$, $\exists$ positive real numbers $\Delta^\star$ such that if $\mathbf{z}(0) \in \Omega_r$ and $\Delta \in (0,\Delta^\star]$, then $\mathbf{z}(t) \in \Omega_r, \forall t \geq 0$ and $\lim_{t \to \infty} \|\mathbf{z}(t)\| \leq d$.
\end{pro}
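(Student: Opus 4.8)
The plan is to adapt the standard Lyapunov-based sample-and-hold stability argument, using the fact that in Proposition~\ref{pro2} the prediction model \eqref{eq21b} is identical to the closed-loop plant \eqref{eq11}; consequently the predicted and realized trajectories coincide over each sampling interval, so the Lyapunov constraints \eqref{eq21e}--\eqref{eq21f} act directly on the actual state. First I would record the CLF estimates. Because $V(\mathbf{z})=\mathbf{z}^TP\mathbf{z}$ with $P$ symmetric positive definite, there are constants $0<c_1\le c_2$ with $c_1\|\mathbf{z}\|^2\le V(\mathbf{z})\le c_2\|\mathbf{z}\|^2$, so every level set $\Omega_c$ satisfies $\{\|\mathbf{z}\|\le\sqrt{c/c_2}\}\subseteq\Omega_c\subseteq\{\|\mathbf{z}\|\le\sqrt{c/c_1}\}$. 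Since $\Omega_r$ is by hypothesis a stability region of the continuously implemented controller \eqref{eq19}, the inequality $\dot V(\mathbf{z},h(\mathbf{z}))=L_\Lambda V+h(\mathbf{z})L_BV<0$ holds on $\Omega_r\setminus\{0\}$; by continuity and compactness this upgrades to $\dot V(\mathbf{z},h(\mathbf{z}))\le-\alpha(\|\mathbf{z}\|)$ on $\Omega_r$ for some continuous positive-definite $\alpha$.

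Second, I would quantify the sample-and-hold gap. On the compact set $\Omega_r$ with $u\in[u_{min},u_{max}]$, the bilinear field $\Lambda\mathbf{z}+uB\mathbf{z}$ is bounded and Lipschitz, giving $\|\mathbf{z}(t)-\mathbf{z}(t_k)\|\le M_z\Delta$ for $t\in[t_k,t_k+\Delta]$; and since $\dot V(\cdot,u)$ is quadratic in $\mathbf{z}$ it is Lipschitz on $\Omega_r$, so there is a constant $M>0$ with $\dot V(\mathbf{z}(t),u)\le\dot V(\mathbf{z}(t_k),u)+M\Delta$ throughout a hold interval. This is the estimate that converts the pointwise-in-time Lyapunov decrease guaranteed at $t_k$ by \eqref{eq21f} into a genuine decrease of $V$ across the whole interval.

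Third, given $d>0$ I would fix $\hat r\le c_1d^2$ so that $\Omega_{\hat r}\subseteq\{\|\mathbf{z}\|\le d\}$, and note that the annulus $\Omega_r\setminus\Omega_{\hat r}$ is bounded away from the origin, hence $\alpha(\|\mathbf{z}\|)\ge\alpha_{\min}>0$ there. Choosing $\Delta^\star$ small enough that $M\Delta^\star<\alpha_{\min}$, the constraint \eqref{eq21f} together with the Step-2 estimate yields, for $\mathbf{z}(t_k)\in\Omega_r\setminus\Omega_{\hat r}$, that $\dot V(\mathbf{z}(t),u^\star)<0$ on the interval and $V(\mathbf{z}(t_k+\Delta))-V(\mathbf{z}(t_k))\le-(\alpha_{\min}-M\Delta)\Delta<0$. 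Thus $V$ never exceeds $V(\mathbf{z}(t_k))$ during a hold outside $\Omega_{\hat r}$, while inside $\Omega_{\hat r}$ the constraint \eqref{eq21e} (feasible since $u=h$ keeps $V\le\hat r$ for small $\Delta$) keeps $V\le\hat r$; together these give invariance of $\Omega_r$. Moreover the per-interval decrease is bounded below by the fixed amount $(\alpha_{\min}-M\Delta)\Delta$, so any trajectory starting in $\Omega_r$ enters $\Omega_{\hat r}$ in finite time and, by invariance of $\Omega_{\hat r}$, remains there; since $\Omega_{\hat r}\subseteq\{\|\mathbf{z}\|\le d\}$ this yields $\mathbf{z}(t)\in\Omega_r$ for all $t\ge0$ and $\limsup_{t\to\infty}\|\mathbf{z}(t)\|\le d$. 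Recursive feasibility is immediate because $u=h(\mathbf{z})$ always satisfies \eqref{eq21d}--\eqref{eq21f}.

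The step I expect to be the main obstacle is the behaviour near the origin, which is exactly what forces the conclusion to be practical (ultimate boundedness to radius $d$) rather than asymptotic stability: since $\alpha(\|\mathbf{z}\|)\to0$ as $\mathbf{z}\to0$, the guaranteed decrease $\alpha_{\min}-M\Delta$ degrades as $\hat r$ is shrunk, so no single $\Delta^\star$ drives $V$ to zero. The delicate part is therefore the coupled selection of thresholds -- tying $\hat r$ to the target $d$, verifying that $\alpha$ is bounded below on the resulting annulus, and choosing $\Delta^\star$ small enough to simultaneously secure (i) the strict per-interval decrease on $\Omega_r\setminus\Omega_{\hat r}$, (ii) feasibility of \eqref{eq21e} inside $\Omega_{\hat r}$, and (iii) invariance of $\Omega_r$ despite the worst-case growth of $V$ over one hold. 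Checking that the controller remains well defined as $\hat r\to0$ (which requires $\Delta^\star\to0$ commensurately) is the one place that needs care beyond routine estimation.
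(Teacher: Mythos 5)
Your proposal is correct and follows essentially the same route as the paper's proof: the standard sample-and-hold Lyapunov argument with a ring/annulus near the boundary of $\Omega_r$ where $\dot V$ is bounded away from zero, a Lipschitz bound on the drift of $\dot V$ over one hold interval to pick $\Delta^\star$, invariance of $\Omega_{\hat r}$ and $\Omega_r$, and feasibility inherited from the explicit controller $h(\mathbf{z})$. The only notable (and arguably favorable) difference is bookkeeping: you fix $\hat r\le c_1 d^2$ from the given $d$ and extract a uniform per-interval decrease to get finite-time entry into $\Omega_{\hat r}$, whereas the paper fixes $\hat r$ first and reads off $d$ from $V(\mathbf{z})\le\hat r\implies\|\mathbf{z}\|\le d$ at the end.
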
 
 \begin{proof}
 	The proof is divided into three parts. In \emph{Part \ref{part1}}, the robustness of the explicit controller is shown which preserves the closed-loop stability when the control action is implemented in a sample-and-hold fashion with a sufficiently small hold time ($\Delta$). In \emph{Part \ref{part2}}, the controller of \eqref{eq21a}-\eqref{eq21f} is shown to be feasible for all $\mathbf{z}(0) \in \Omega_r$. Subsequently, in \emph{Part \ref{part3}}, it is shown that the stability region $\Omega_r$ is invariant under the predictive controller of \eqref{eq21a}-\eqref{eq21f}.  
 	\begin{proofpart}\label{part1}
 	 To prove the robustness of the explicit controller, we need to show the existence of a positive real number $\Delta^\star$ such that all state trajectories originating in $\Omega_r$ converge to the level set $\Omega_{\hat{r}}$ for any value of $\Delta \in (0,\Delta^\star]$. To achieve this, we need to consider different cases for $\mathbf{z}(0)$ inside the stability region, i.e., we consider arbitrary regions $\mathcal{Z}$ and $\Omega_{r^\prime}$ inside $\Omega_{r}$.  Figure~\ref{fig1} represents a schematic of the different cases considered in the following proof.
 	 
 	 	\begin{figure}[h]
 	 	%\centering     %%% not \center
 	 	\hspace{-3cm}
 	 	\vspace*{-1cm}
 	 	\includegraphics[width=0.8\textwidth]{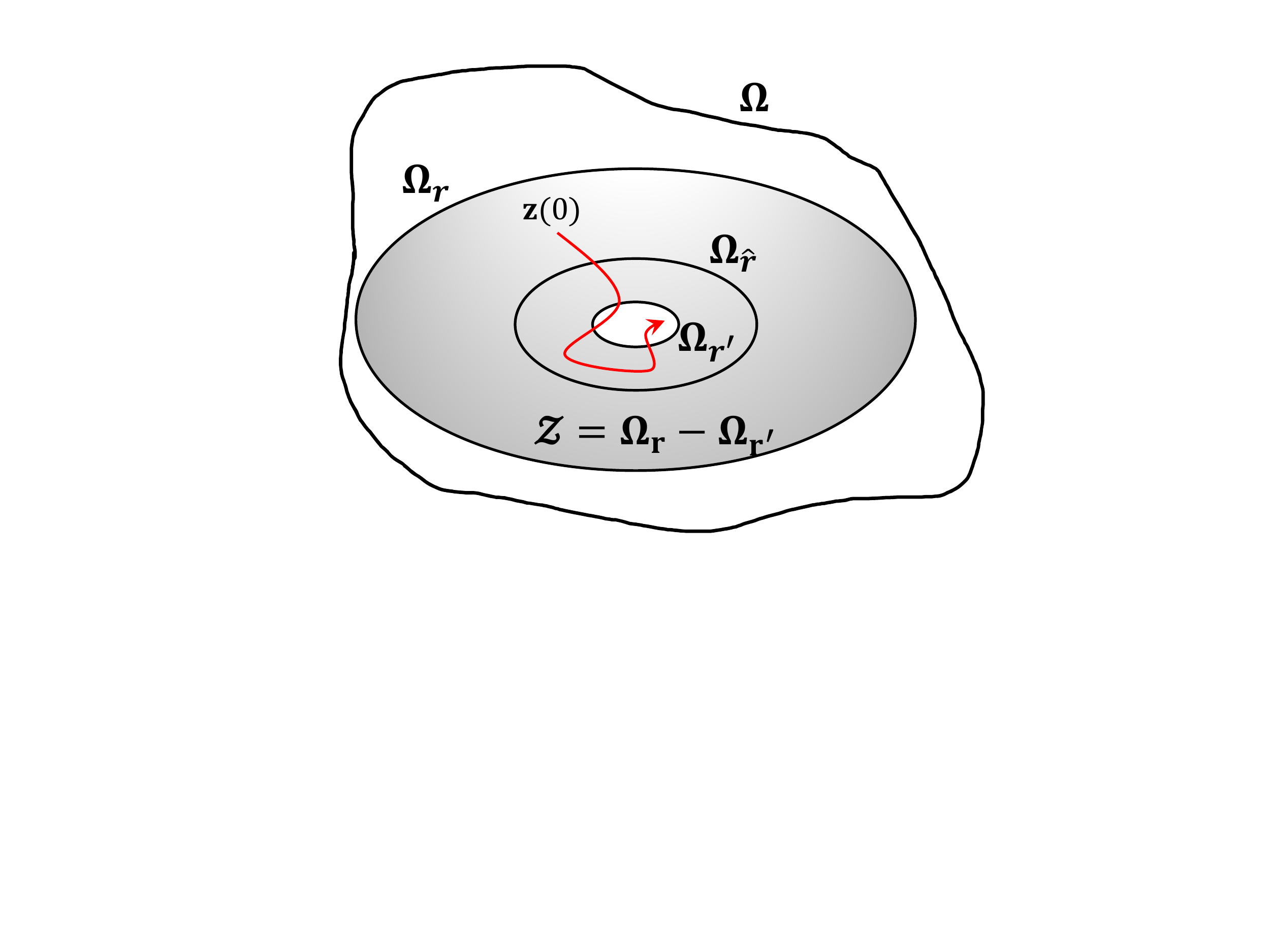}
 	 	\vspace*{-4cm}
 	 	%\vspace{-\intextsep}
 	 	\caption{A schematic representing the stability region of the bounded controller $\Omega_{r}$, together with the sample-and-hold constrained set, $\Omega_{\hat{r}}$, and the overall stability region of the system, $\Omega$. The grey shaded part represents the ring, $\mathcal{Z}$, close to the boundary of the stability region, $\Omega_r$.}\label{fig1}
 	 \end{figure}

 	 First, consider a small region close to the boundary of the stability region denoted as $\mathcal{Z} := \{\mathbf{z}  : (r-r^\prime) \leq V(\mathbf{z}) \leq r\}$, for some $0 < r^\prime < r$. Now, let $h(0) = h_0$ be computed for $\mathbf{z}(0) = \mathbf{z}_0 \in \mathcal{Z}$ and held constant until a time $\hat{\Delta}$ such that $h(t) := h_0 ~\forall t \in (0,\hat{\Delta}]$. Then,
  \begin{equation}\label{eq23}
 \begin{aligned}
 \dot{V}(\mathbf{z}(t)) &= L_\Lambda V(\mathbf{z}(t)) + L_B V(\mathbf{z}(t)) h_0 \\
 &= L_\Lambda V(\mathbf{z}_0) + L_B V(\mathbf{z}_0) h_0 \\
 &~ + (L_\Lambda V(\mathbf{z}(t)) - L_\Lambda V(\mathbf{z}_0)) \\
 &~ + (L_B V(\mathbf{z}(t)) h_0 -  L_B V(\mathbf{z}_0) h_0).
 \end{aligned}
 \end{equation}
Since the initial state $\mathbf{z}_0 \in \mathcal{Z} \subseteq \Omega_r$, and $h_0$ is computed based on the stabilizing control law \eqref{eq19}, it follows that $\dot{V}(\mathbf{z}_0) :=   L_\Lambda V(\mathbf{z}_0) + L_B V(\mathbf{z}_0) h_0 \leq -\rho V(\mathbf{z}_0)$ (this can be shown by substituting \eqref{eq19} in $\dot{V}$). Combining this with the definition of $\mathcal{Z}$, we have $L_\Lambda V(\mathbf{z}_0) + L_B V(\mathbf{z}_0) h_0 \leq -\rho (r-r^\prime)$. 

We also need the following properties to complete the proof. 
\begin{prp}
Since the evolution of $\mathbf{z}$ is continuous, $\|u\| \leq u_{max}$ and $\mathcal{Z}$ is bounded, one can find, for all $\mathbf{z}_0 \in \mathcal{Z}$ and a fixed $\hat{\Delta}$, a positive real number $k_1$ such that $\|\mathbf{z}(t) - \mathbf{z}_0\| \leq k_1\hat{\Delta}$ for all $t \leq \hat{\Delta}$.
\end{prp} 
\begin{prp}
Additionally, since $L_\Lambda V(\cdot)$ and $ L_B V(\cdot)$ are continuous functions, the following properties hold:
\begin{equation}\label{eq24}
\begin{aligned}
&\|L_\Lambda V(\mathbf{z}(t)) - L_\Lambda V(\mathbf{z}_0)\| \leq k_2\|\mathbf{z}(t) - \mathbf{z}_0\| \leq k_1k_2\hat{\Delta} \\
& \|L_B V(\mathbf{z}(t)) h_0 -  L_B V(\mathbf{z}_0) h_0\| \leq k_3\|\mathbf{z}(t) - \mathbf{z}_0\| \leq k_1k_3\hat{\Delta} \\
\end{aligned} 
\end{equation}
\end{prp} 
where the second inequality in each equation holds because of \emph{Property 1}.  Using all the above inequalities in \eqref{eq23}, 
\begin{equation}\label{eq25}
\dot{V}(\mathbf{z}(t)) \leq -\rho(r-r^\prime)  + (k_1k_3 + k_2k_3)\hat{\Delta}
\end{equation}
 Now, if we choose $\hat{\Delta} < (\rho(r-r^\prime) - c)/(k_1k_3 + k_2k_3)$ where $c < \rho(r-r^\prime)$ is a positive number, we get $\dot{V}(\mathbf{z}(t)) \leq -c < 0$ for all $t \leq \hat{\Delta}$. This implies that, given a $\hat{r}$, if we find an $r^\prime$ such that $r-r^\prime < \hat{r}$ and determine the corresponding $\hat{\Delta}$, then the control action computed for any $\mathbf{z} \in \mathcal{Z}$ and held for a time period less than $\hat{\Delta}$ will ensure that the state does not escape $\Omega_r$ (because $\dot{V} < 0$ during this time). 
 
 Now, we need to show the existence of a ${\Delta^\prime}$ such that for all $\mathbf{z}_0 \in \Omega_{r^\prime} := \{\mathbf{z}_0 : V(\mathbf{z}_0) \leq r -r^\prime\} $ we have $\mathbf{z}_0 \in \Omega_{\hat{r}} := \{\mathbf{z}_0 : V(\mathbf{z}_0) \leq \hat{r}\} $. Consider ${\Delta^\prime}$ such that 
 \begin{equation}\label{eq26}
 \hat{r} = \underset{\mathbf{z}_0 \in \Omega_r^\prime, h \in \mathcal{U}, t \in [0,{\Delta^\prime}]}{\text{max}} V(\mathbf{z}(t))
 \end{equation}
  This is possible because both $V$ and $\mathbf{z}$ are continuous functions, and therefore for any $r^\prime < r$, one can find a sufficiently small $\Delta^\prime$ such that \eqref{eq26} holds. All that remains now is to show that for all $\mathbf{z}_0 \in \Omega_{\hat{r}}$ if $\Delta \in (0,\Delta^\star]$ where $\Delta^\star = \text{min}\{\hat{\Delta},\Delta^\prime\}$, then $\mathbf{z}(t) \in \Omega_{\hat{r}} ~\forall t \geq 0$. 
  
  Consider all $\mathbf{z}_0 \in \Omega_{\hat{r}} \cap \Omega_{r^\prime}$. Then by definition, $\mathbf{z}(t) \in \Omega_{\hat{r}}$ for $t \in [0,\Delta^\star]$ since $\Delta^\star \leq {\Delta^\prime}$. On the other hand, for all $\mathbf{z}_0 \in \Omega_{\hat{r}}/\Omega_{r^\prime}$, i.e., $\mathbf{z}_0 \in \mathcal{Z}$, it was shown that $\dot{V} < 0$ for $t \in [0,\Delta^\star]$ since $\Delta^\star \leq \hat{\Delta}$. Therefore, $\Omega_{\hat{r}}$ is an invariant set under the control law of \eqref{eq19}. 
  
  Hence, all trajectories originating in $\Omega_r$ converge to $\Omega_{\hat{r}}$ with a hold time less than $\Delta^\star$. That is, for all $\mathbf{z}_0 \in \Omega_r, \text{lim} ~\text{sup}_{t \to \infty} V(\mathbf{z}(t)) \leq \hat{r}$. Since, $V(\cdot)$ is a continuous function, one can always find a finite, positive number $d$ such that $V(\mathbf{z})\leq\hat{r}\implies\|\mathbf{z}\| \leq d$. Therefore,  $\text{lim}~ \text{sup}_{t \to \infty} V(\mathbf{z}(t)) \leq \hat{r} \implies \text{lim} ~\text{sup}_{t \to \infty} \|\mathbf{z}(t)\| \leq d$.
 	\end{proofpart}
 	\begin{proofpart}\label{part2}
 		Let us consider some $\mathbf{z}(0) \in \Omega_r$ under the predictive controller of \eqref{eq21a}-\eqref{eq21f} with a prediction horizon $N_p$ denoting the number of prediction steps. There are two cases. If $\mathbf{z}_0 \in \Omega_r/\Omega_{\hat{r}}$, the feasibility of constraint \eqref{eq21f} is guaranteed by the control law of \eqref{eq19} as shown in \emph{Part \ref{part1}}. Additionally, if $V(\mathbf{z}(0)) \leq \hat{r}$, once again the control input trajectory under the explicit controller of \eqref{eq19}, given by $u(t) = h(\mathbf{z}(t)), ~\forall t \in [t_k,t_{k}+N_p\Delta]$, provides a feasible initial guess to constraint \eqref{eq21e} because it was designed to stabilize the system, i.e., $V(\mathbf{z}(t)) \leq \hat{r}$. This shows that for all $\mathbf{z}(0) \in \Omega_r$ the Koopman LMPC of \eqref{eq21a}-\eqref{eq21f} is feasible. 
 	\end{proofpart}
 	\begin{proofpart}\label{part3}
 		To prove the last part, please note that since constraint \eqref{eq21f} is feasible, upon implementation it ensures that the value of the Lyapunov function under the predictive controller $u(t)$ decreases at each sampling time. Since $\Omega_r$ is a level set of $V$, and $\dot{V}$ decreases, the state trajectories cannot escape $\Omega_r$. Additionally, satisfying constraint $\eqref{eq21e}$ means that $\Omega_{\hat{r}}$ continues to remain invariant under the implementation of the predictive controller of \eqref{eq21a}-\eqref{eq21f}. The recursive feasibility of \eqref{eq21d}-\eqref{eq21f} implies that $V \leq r$ and $\dot{V} < 0$ for all $\mathbf{z}(t)$ under the Lyapunov-based controller given by \eqref{eq21a}-\eqref{eq21f}. However, since it is implemented in a sample-and-hold fashion there exists a maximum sampling time $\Delta^\star$, given in \emph{Part \ref{part1}}, such that when $\Delta \in (0,\Delta^\star)$ it is guaranteed that for all $\mathbf{z(0)} \in \Omega_r$, $\text{lim}_{t \to \infty} \|\mathbf{z}(t)\| \leq d$. 
 	\end{proofpart}
 	This completes the proof.
 \end{proof}

\begin{rmk}
	Please note that in practice, one can characterize the values of $r,\hat{r},\Delta^\star$ and $d$ by performing several closed-loop simulations where the controller defined in \eqref{eq21a}-\eqref{eq21f} is continuously applied to the system. However, the estimate of the stability region $\Omega_r$ determined using explicit controllers such as \eqref{eq19} does not necessarily equate the entire domain $\Omega$, which remains a difficult problem even for linear systems. Nevertheless, these estimates can be improved by considering multiple CLFs.
\end{rmk}

\emph{Proposition \ref{pro2}} formalizes that the stability properties of the Koopman bilinear system under the Lyapunov–based predictive controller are inherited from the
explicit (bounded) controller under discrete implementation. Now, when there is no mismatch between the Koopman model and the original system, the stability properties will be easily translated to the original system. Obviously, we can derive an exact model without any model-plant mismatch if we can implement the infinite dimensional Koopman operator. However, as described previously, only a finite dimensional approximation based on the projection of these operators on a subspace is commonly used for practical implementation. In this regard, since the model-plant mismatch between the Koopman model and the original system is inevitable, we additionally study and derive the bound on the prediction error between the original state and the predicted state from the Koopman model in the following theorem. 

In order to extend the stability results to the original nonlinear system of \eqref{eq7}, we make the following assumption.

\begin{asm}\label{asm2}
	Let the inverse mapping from the Koopman space, $\mathbf{z}$, to the original state space, $\mathbf{x}$, be continuously differentiable, i.e., $\exists ~\boldsymbol{\xi}(\mathbf{z}) = [\xi_1(\mathbf{z}), \cdots, \xi_n(\mathbf{z})]^T \in C^1: \mathbb{R}^N \to \mathbb{R}^n$ such that $\hat{x}_i = \xi_i(\mathbf{z}), i = 1,\cdots,n$ where $\hat{\mathbf{x}} = [\hat{x}_1,\cdots,\hat{x}_n]$ is the predicted state vector obtained from the inverse mapping defined above.   
\end{asm}

Then, the stability properties of the closed-loop system (\ref{eq21a})-(\ref{eq21f}) of $\mathbf{z}$ can be shown to be inherited to the original nonlinear system of $\mathbf{x}$ under the above assumption and is formalized in the following theorem.

\begin{thm}\label{thm1}
	Suppose that system (\ref{eq7}) satisfies \emph{Assumptions \ref{asm1}-\ref{asm2}}. Let $\mathbf{x}(t)$ and $\hat{\mathbf{x}}(t)$ denote the original state and the predicted state values, respectively. The solutions for $\mathbf{x}(t)$ and $\hat{\mathbf{x}}(t)$ are given by the following dynamic equations:
	%\begin{equation}
	\begin{align}
	 & \dot{\mathbf{x}}(t) = \mathbf{f}(\mathbf{x}(t),u(t)), \quad \quad ~~\mathbf{x}(0) = \mathbf{x}_0 \label{eq27}\\
	 & {\hat{\mathbf{x}}}(t) = \xi(\mathbf{z}(t)), \qquad \qquad ~~~\hat{\mathbf{x}}(0) = \mathbf{x}_0 \label{eq28}\\
	 & \dot{\mathbf{z}}(t) = \Lambda\mathbf{z}(t) + u(t)B\mathbf{z}(t), ~\mathbf{z}(0) = \boldsymbol{\phi}(\hat{\mathbf{x}}(0))  \label{eq29}
	\end{align}
	%\end{equation}
	Then, the difference between $\mathbf{x}(t)$ and $\hat{\mathbf{x}}(t)$ is bounded by
	\begin{equation}\label{eq30}
	\|\mathbf{x}(t)-\hat{\mathbf{x}}(t)\|  \leq \frac{\nu}{l_x}(e^{l_xt} - 1)
	\end{equation}
	where $\nu$ denotes the modeling error which bounds the difference between
	\begin{equation}\label{eq31}
     \|\mathbf{f}(\hat{\mathbf{x}},u) - \hat{\mathbf{f}}(\hat{\mathbf{x}},u)\| \leq \nu
	\end{equation}
	where $\mathbf{f}(\cdot) = \mathbf{F}(\cdot) + \mathbf{G}(\cdot)u$ is the original nonlinear dynamical system, and $\hat{\mathbf{f}}(\hat{\mathbf{x}},u) = \frac{\partial \xi}{\partial \mathbf{z}}\dot{\mathbf{z}} $ denotes the solution to $\dot{\hat{\mathbf{x}}}(t)$.
	Under this condition, the stabilizing feedback control input ${u}^\star(t)$ obtained from the Lyapunov-based predictive control law of (\ref{eq21a})-(\ref{eq21f}) for the Koopman linear system of (\ref{eq9}) also stabilizes the original system of (\ref{eq6}), i.e., the origin of the closed-loop system of (\ref{eq6}) is Lyapunov stable. 
\end{thm}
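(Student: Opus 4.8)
The plan is to establish the two claims in sequence: first the explicit prediction-error bound \eqref{eq30}, which is a Gr\"onwall-type estimate, and then the transfer of the Lyapunov stability guaranteed by \emph{Proposition \ref{pro2}} from the Koopman coordinates to the original state.

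For the error bound, I would introduce the error signal $\mathbf{e}(t) = \mathbf{x}(t) - \hat{\mathbf{x}}(t)$, which satisfies $\mathbf{e}(0) = 0$ by the common initialization in \eqref{eq27}--\eqref{eq28}. Differentiating and inserting the intermediate term $\mathbf{f}(\hat{\mathbf{x}},u)$, I would split the dynamics as
\begin{equation*}
\dot{\mathbf{e}} = \big[\mathbf{f}(\mathbf{x},u) - \mathbf{f}(\hat{\mathbf{x}},u)\big] + \big[\mathbf{f}(\hat{\mathbf{x}},u) - \hat{\mathbf{f}}(\hat{\mathbf{x}},u)\big].
\end{equation*}
The first bracket is controlled by local Lipschitz continuity of $\mathbf{f}$ in its state argument, giving $\|\mathbf{f}(\mathbf{x},u)-\mathbf{f}(\hat{\mathbf{x}},u)\| \leq l_x\|\mathbf{e}\|$ with Lipschitz constant $l_x$, while the second bracket is exactly the modeling error bounded by $\nu$ in \eqref{eq31}. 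Taking norms and using $\tfrac{d}{dt}\|\mathbf{e}\| \leq \|\dot{\mathbf{e}}\|$ yields the scalar differential inequality $\tfrac{d}{dt}\|\mathbf{e}(t)\| \leq l_x\|\mathbf{e}(t)\| + \nu$ with $\|\mathbf{e}(0)\|=0$. Applying the comparison lemma against the solution of the associated equality $\dot{w}=l_x w + \nu$, $w(0)=0$, then delivers \eqref{eq30} directly.

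For the stability transfer, I would combine three facts: (i) by \emph{Proposition \ref{pro2}} the closed-loop Koopman state remains in $\Omega_r$ and is ultimately bounded by $d$; (ii) $\xi \in C^1$ by \emph{Assumption \ref{asm2}}, so (with the Koopman equilibrium mapping to the state-space origin) it carries the bounded, small $\mathbf{z}$-trajectory to a bounded, small $\hat{\mathbf{x}}$-trajectory; and (iii) the error bound \eqref{eq30} controls $\|\mathbf{x}-\hat{\mathbf{x}}\|$. The triangle inequality $\|\mathbf{x}(t)\| \leq \|\hat{\mathbf{x}}(t)\| + \|\mathbf{e}(t)\|$ then bounds the true state. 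To obtain genuine $\epsilon$--$\delta$ stability I would exploit the receding-horizon re-initialization $\mathbf{z}(t_k)=\boldsymbol{\Psi}(\mathbf{x}(t_k))$ of \eqref{eq21c}: because the prediction is re-anchored to the measured state at every sampling instant, $\mathbf{e}$ resets to zero each period, so the exponential factor in \eqref{eq30} accrues only over a single interval of length $\Delta$, keeping the per-step deviation of order $\nu\Delta$ rather than letting it blow up over the infinite horizon.

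The main obstacle is precisely this last point. The bound \eqref{eq30} grows exponentially in $t$, so a naive application as $t\to\infty$ is vacuous and cannot by itself certify Lyapunov stability. The crux of the argument is therefore to show that the sampled-data feedback re-synchronizes $\hat{\mathbf{x}}$ with $\mathbf{x}$ at each $t_k$, so that the only quantity that must be made small is $\nu\Delta$ over one hold period; this is then composed with the continuity of $\xi$ and of the lift $\boldsymbol{\Psi}$ near the origin to convert ``$\mathbf{x}_0$ small'' into ``$\mathbf{z}$ small'' and back into ``$\mathbf{x}(t)$ small.'' Assembling these continuity estimates and the per-step error into a single $\delta(\epsilon)$, while keeping the trajectory inside the region where the local Lipschitz constant $l_x$ and the inverse map $\xi$ are valid, is the delicate bookkeeping that the formal proof must handle.
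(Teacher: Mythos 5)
Your Gr\"onwall argument for the error bound \eqref{eq30} is exactly the paper's: the same insertion of the intermediate term $\mathbf{f}(\hat{\mathbf{x}},u)$, the same use of the Lipschitz constant $l_x$ for the first bracket and the modeling-error bound $\nu$ of \eqref{eq31} for the second, and the same integration of $\|\dot{e}\| \leq l_x\|e\| + \nu$ from $e(0)=0$. Your stability-transfer step also mirrors the paper's Part 1: invoke \emph{Proposition \ref{pro2}} for boundedness of $\mathbf{z}(t)$ in $\Omega_r$, push it through the $C^1$ inverse map $\boldsymbol{\xi}$ of \emph{Assumption \ref{asm2}} to bound $\|\hat{\mathbf{x}}(t)\|$, and close with the triangle inequality. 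Where you genuinely depart from the paper is in the last step, and your instinct there is sound: the paper simply asserts that because the error is ``bounded'' and $\|\hat{\mathbf{x}}(t)\|\leq\hat{\epsilon}$, there is an $\epsilon$ with $\|\mathbf{x}(t)\|\leq\epsilon$ for all $t$, but the bound $\tfrac{\nu}{l_x}(e^{l_x t}-1)$ diverges as $t\to\infty$, so on its face this conclusion does not follow for the infinite horizon. Your proposed repair --- exploiting the re-initialization $\mathbf{z}(t_k)=\boldsymbol{\Psi}(\mathbf{x}(t_k))$ in \eqref{eq21c} so that the prediction error is re-anchored each sampling period and the exponential factor only accrues over one hold interval of length $\Delta$, giving a uniform per-step deviation of order $\nu\Delta$ --- is a genuine strengthening that the paper's proof lacks, and it is the right mechanism to make the $\epsilon$--$\delta$ claim rigorous. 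Two caveats you should make explicit if you write this out: the reset $\hat{\mathbf{x}}(t_k)=\mathbf{x}(t_k)$ requires $\boldsymbol{\xi}\circ\boldsymbol{\Psi}$ to be (at least approximately) the identity on the relevant region, which is implicit but not stated in \emph{Assumption \ref{asm2}}; and once the prediction is re-anchored to the true state, the Lyapunov decrease certified by \emph{Proposition \ref{pro2}} must be shown to survive the per-step perturbation of order $\nu\Delta$ (an inherent-robustness argument), which neither you nor the paper carries out --- the paper defers exactly this issue to future work in its final remark.
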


\begin{proof}
	The proof is divided into two parts. First, we show that the predicted state $\hat{\mathbf{x}}(t)$ is stable under the application of the Koopman LMPC controller of \eqref{eq21a}-\eqref{eq21f} to the Koopman bilinear system. In the second part, we show that the evolution of the error between the original state and the predicted state is bounded under \emph{Assumption \ref{asm2}} and the Lipschitz property of the vector fields, $\mathbf{F}$ and $\mathbf{G}$.
	\begin{proofpart}\label{part4}
Let us consider any initial condition $\mathbf{x}(0)$ such that $\mathbf{x}(0) = \hat{\mathbf{x}}(0) = \mathbf{x}_0$ and $\|\mathbf{x}_0\| \leq \delta$. Recall from \emph{Proposition \ref{pro2}} that the predictive controller of \eqref{eq21a}-\eqref{eq21f} ensures that the lifted states do not escape the stability region $\Omega_r$, i.e., $V(\mathbf{z}(t)) \leq r, \dot{V} < 0 ~\forall t$. Therefore, $\text{lim} ~\text{sup}_{t \to \infty} \|\mathbf{z}(t)\| \leq d$. Now, from \emph{Assumption \ref{asm2}}, since the inverse mapping $\boldsymbol{\xi}(\mathbf{z})$ is assumed to be continuous (differentiable), the following holds true:
\begin{equation}
\begin{aligned}
& \|\boldsymbol{\xi}(\mathbf{z}(t))\| := \|\hat{\mathbf{x}}(t)\| \leq \epsilon_z\|\mathbf{z}(t)\| \\
& \text{lim}~\text{sup}_{t \to \infty} \|\hat{\mathbf{x}}(t)\| \leq \hat{d}
\end{aligned}
\end{equation}  
where $\hat{d} = \epsilon_zd$. 
In other words, since the controller ensures asymptotic stability of the lifted state, it implies that $\|\mathbf{z}(t)\|$ is bounded at all times and eventually converges to $d$. This in turn implies that $\hat{\mathbf{x}}(t)$ is bounded at all times, albeit by different constants at different sampling times. Now, if we choose $\hat{\epsilon}$ to be the maximum of all these bounds, then $\|\hat{\mathbf{x}}(t)\| < \hat{\epsilon},~ \forall t$. Hence, for any initial condition $\|\mathbf{x}_0\|  \leq \delta$, the implementation of the predictive controller of \eqref{eq21a}-\eqref{eq21f} guarantees that $\|\hat{\mathbf{x}}(t)\| \leq \hat{\epsilon}, \forall t$. This implies that the predicted states of the original system starting close enough to the equilibrium (at a distance $\delta$) will be maintained close to the equilibrium at all times.  
	\end{proofpart}
\begin{proofpart}\label{part5}
	Now, it remains to prove that the modeling error between the original state vector and the predicted states is bounded at all times for all $\|\mathbf{x}_0\| \leq \delta$. Let us consider the modeling error $e(t) = \mathbf{x}(t) - \hat{\mathbf{x}}(t)$, then the evolution of the error is given as 
	\begin{equation}\label{eq32}
	\begin{aligned}
     \|\dot{e}(t)\| &= \|\dot{\mathbf{x}}(t) - \dot{\hat{\mathbf{x}}}(t)\| \\
     &= \|\mathbf{f}(\mathbf{x},u) - \hat{\mathbf{f}}(\hat{\mathbf{x}},u)\|
	\end{aligned}
	\end{equation}
	where $\mathbf{f}(\mathbf{x},u) = \mathbf{F}(\mathbf{x}) + \mathbf{G}(\mathbf{x})u$ is the nonlinear dynamical system, and $\hat{\mathbf{f}}(\hat{\mathbf{x}},u)$ denotes the evolution of the predicted state $\hat{\mathbf{x}}$, which can be determined from the following Koopman bilinear system:
	\begin{equation}\label{eq33}
	\begin{aligned}
\hat{\mathbf{f}}(\hat{\mathbf{x}},u) &= \frac{\partial \boldsymbol{\xi}}{\partial \mathbf{z}}\dot{\mathbf{z}} \\[5pt]
	\end{aligned}
	\end{equation}
	By adding and subtracting $\mathbf{f}(\hat{\mathbf{x}},u)$ to \eqref{eq32}, we get
	\begin{equation}\label{eq34}
	\begin{aligned}
	\|\dot{e}(t)\| &=  \|\mathbf{f}(\mathbf{x},u) - \mathbf{f}(\hat{\mathbf{x}},u) + \mathbf{f}(\hat{\mathbf{x}},u)- \hat{\mathbf{f}}(\hat{\mathbf{x}},u)\| \\
	&\leq \|\mathbf{f}(\mathbf{x},u) - \mathbf{f}(\hat{\mathbf{x}},u)\| + \|\mathbf{f}(\hat{\mathbf{x}},u)- \hat{\mathbf{f}}(\hat{\mathbf{x}},u)\| 
	\end{aligned}
	\end{equation}
	The Lipschitz property of $\mathbf{f}(\cdot)$, combined with the bounds on $u$, implies that there exists a positive constant $l_x$ such that the following inequality holds for all $\mathbf{x},\mathbf{x}^\prime \in \mathcal{X}$ and $u \in \mathcal{U}$:
	\begin{equation}\label{eq35}
	\begin{aligned}
	\|\mathbf{f}(\mathbf{x},u) - \mathbf{f}(\mathbf{x}^\prime,u)\| \leq l_x \|\mathbf{x} - \mathbf{x}^\prime\|
	\end{aligned}
	\end{equation}
	Additionally, since $\hat{\mathbf{x}}$ is bounded (see \emph{Part \ref{part4}} in the proof of \emph{Theorem \ref{thm1}}), $\mathbf{f}$ is Lipschitz, and the mapping $\boldsymbol{\xi}$ is continuously differentiable, there exists a positive constant $\nu$ such that the second term on the right hand side of the inequality in \eqref{eq34} is bounded by $\nu$. Combining it with \eqref{eq35} we have
	\begin{equation}\label{eq36}
	\begin{aligned}
\|\dot{e}(t)\| &\leq l_x\|\mathbf{x}-\hat{\mathbf{x}}\| + \nu \\
& \leq l_x\|e(t)\| + \nu
	\end{aligned}		
	\end{equation}
	Therefore, given the zero initial condition (i.e., $e(0) = 0$), the upper bound for the norm of the error vector can be determined by integrating \eqref{eq36} as
	\begin{equation}\label{eq37}
	\begin{aligned}
	\int_0^t \frac{\|\dot{e}(\tau)\|}{l_x\|e(\tau)\| + \nu} \leq t
	\end{aligned}
	\end{equation}
	and solving for $\|e(t)\|$ 
	\begin{equation}\label{eq38}
	\|e(t)\| = \|\mathbf{x}(t)-\hat{\mathbf{x}}(t)\|  \leq \frac{\nu}{l_x}(e^{l_xt} - 1)
	\end{equation}
	Finally, since the error between the original and predicted vectors is bounded and that the Koopman LMPC controller of \eqref{eq21a}-\eqref{eq21f} stabilizes the predicted state vector $\|\hat{\mathbf{x}}(t)\| \leq \hat{\epsilon}$, there exists a positive constant $\epsilon$ such that $\|\mathbf{x}(t)\| \leq \epsilon$ for all $t$.
	
	Therefore, for all $\|\mathbf{x}_0\| \leq \delta$ the implementation of the predictive controller of \eqref{eq21a}-\eqref{eq21f} ensures that $\|\mathbf{x}(t)\| \leq \epsilon$ for all $t$, thereby rendering the original nonlinear system stable.
\end{proofpart}    
	This completes the proof.
\end{proof}

\begin{rmk}
	Please note that one cannot guarantee asymptotic stability of the original nonlinear system under the proposed controller because there is always loss of information when transforming the system to a different space.
\end{rmk}

\begin{rmk}
	\emph{Assumption \ref{asm2}} seems restrictive in selecting the types of basis functions to determine the Koopman bilinear models as the inverse of the eigenfunctions is required to be $C^1$. However, in practice, one can numerically obtain a separate mapping from the Koopman space to the original space without actually inverting the eigenfunctions. One example would be to assume the system states $\mathbf{x}$ be contained in the span of $\bar{\mathcal{G}}$, the finite subset of the observable space. This implies that there exists a constant matrix $ {C} \in \mathbb{R}^{n \times N}$ such that $\mathbf{x} = {C}\mathbf{z}$. Then, a convex optimization problem can be solved to determine the relation $C$ \cite{Korda2018a}. In this case, the error of the optimization problem must be certified to be bounded to ensure that the proposed controller successfully stabilizes the closed-loop system.
\end{rmk}

\begin{rmk}
	Please note that in this work we do not consider model-plant mismatch due to uncertainties. In the presence of disturbances, to ensure the robust closed-loop stability of the original system, we have to show the inherent robustness of the KLMPC law of \eqref{eq21} by guaranteeing the robust feasibility and robust positive invariance of the control system (such as in \cite{Allan2017,Rawlings2017}) under a specific prediction error bound. This robust closed-loop stability of KLMPC will be studied as a future work, and the prediction error bound between the original state and the predicted state based on the Koopman model, which derived in \emph{Theorem \ref{thm1}}, would be a great starting point.
\end{rmk}

\section{NUMERICAL EXPERIMENTS}\label{Results}
We applied our results on two illustrative examples: Van der Pol oscillator and a simple pendulum system, showing the performance of our provably-stable
Lyapunov-based predictive controller designed in the Koopman function space. Each example produced closed-loop results that are stable with respect to the original state-space.

\subsection{Van der Pol oscillator}
In our first example, we consider the Van der Pol oscillator which is described by the following equations:
\begin{equation}\label{Van}
\begin{aligned}
& \dot{x}_1 = x_2 \\
& \dot{x}_2 = (1-x_1^2)x_2 - x_1 + u
\end{aligned}
\end{equation}

At $u=0$, the unforced dynamics of the Van der Pol oscillator are characterized by a limit cycle with an unstable equilibrium point at the origin. We will see whether the proposed Koopman LMPC is able to stabilize the system at the origin. First, the data required to build the Koopman bilinear model is generated by simulating the unforced system of \eqref{Van}. The simulations were initialized uniformly over a circle around the origin, and a number of trajectories for $10$ s were collected with a sampling time of $\Delta = 0.01$ s, i.e., $10^3$ time-series samples per trajectory. In the next step, the states were lifted to the high-dimensional space by using monomials of degree $5$ as the dictionary functions $\boldsymbol{\phi}(\mathbf{x}(t))$, i.e., $\boldsymbol{\phi}(\mathbf{x}(t)) = [1,x_1,x_2,x_1^2,x_1x_2,\cdots,x_2^5]^T$. This results in a lifted system of dimension $\mathbf{z} \in \mathbb{R}^{21}$, and the system matrix $\Lambda$ was constructed using the algorithm described in Section \ref{EDMD}. To determine the $B$ matrix in the controlled setting, the relation between the Koopman eigenfunctions and dictionary functions was used as shown in \eqref{eq18}. The derivatives of the eigenfunctions were computed using the symbolic toolbox in MATLAB. This completes the identification of the Koopman bilinear model of \eqref{eq11}. 

Next, the Koopman LMPC developed in Section \ref{KLMPC} was applied to control the system of \eqref{Van} with $N = 21$ eigenfunctions as the new states, $\mathbf{z}$, in the transformed space. The initial condition was chosen randomly around the unstable equilibrium and the control objective was to stabilize the system at the origin. The CLF used to define the explicit stable controller $h(\mathbf{z})$ was obtained by solving the following optimization problem as defined in \cite{Huang2018}:

\begin{equation}\label{eq40}
	\begin{aligned}
	& \underset{\sigma > 0, P = P^T}{\text{min}} 
	&& \sigma - \gamma \text{trace}(PB)\\[7pt]
	&~~~~\text{s.t}
	&& \sigma I - (PA + A^TP) \geq 0\\
	&&& {c}^LI \leq P \leq {c}^UI
	\end{aligned}
\end{equation} 
where $\sigma$ represents the epigraph form of the largest singular value of $(PA + A^TP)$, and ${c^L, c^U} > 0$ are two positive scalars used to bound the eigenvalues of $P$. The weighting parameter $\gamma > 0$ was chosen as $2$ in this example. 
The explicit controller, $h(\mathbf{z})$, was determined by using the obtained CLF, $V = \mathbf{z}^TP\mathbf{z}$, within the Sontag's formula as shown in \eqref{eq19}. The matrices $W$ and $R$ in \eqref{eq21a} were chosen to be $W = I \in \mathbb{R}^{21 \times 21}$ and $R = 1$, respectively. The prediction horizon was set to $1~s$, i.e., $N_p = 1/\Delta = 100$. Figure~\ref{fig2} shows the comparison between open and closed loop results. It can be observed from Figure~\ref{fig2} that the system was stabilized at the origin as desired.

	\begin{figure}[t]
		\centering     %%% not \center
		\subfigure[]{\label{fig:fig8a}\includegraphics[width=0.4\textwidth]{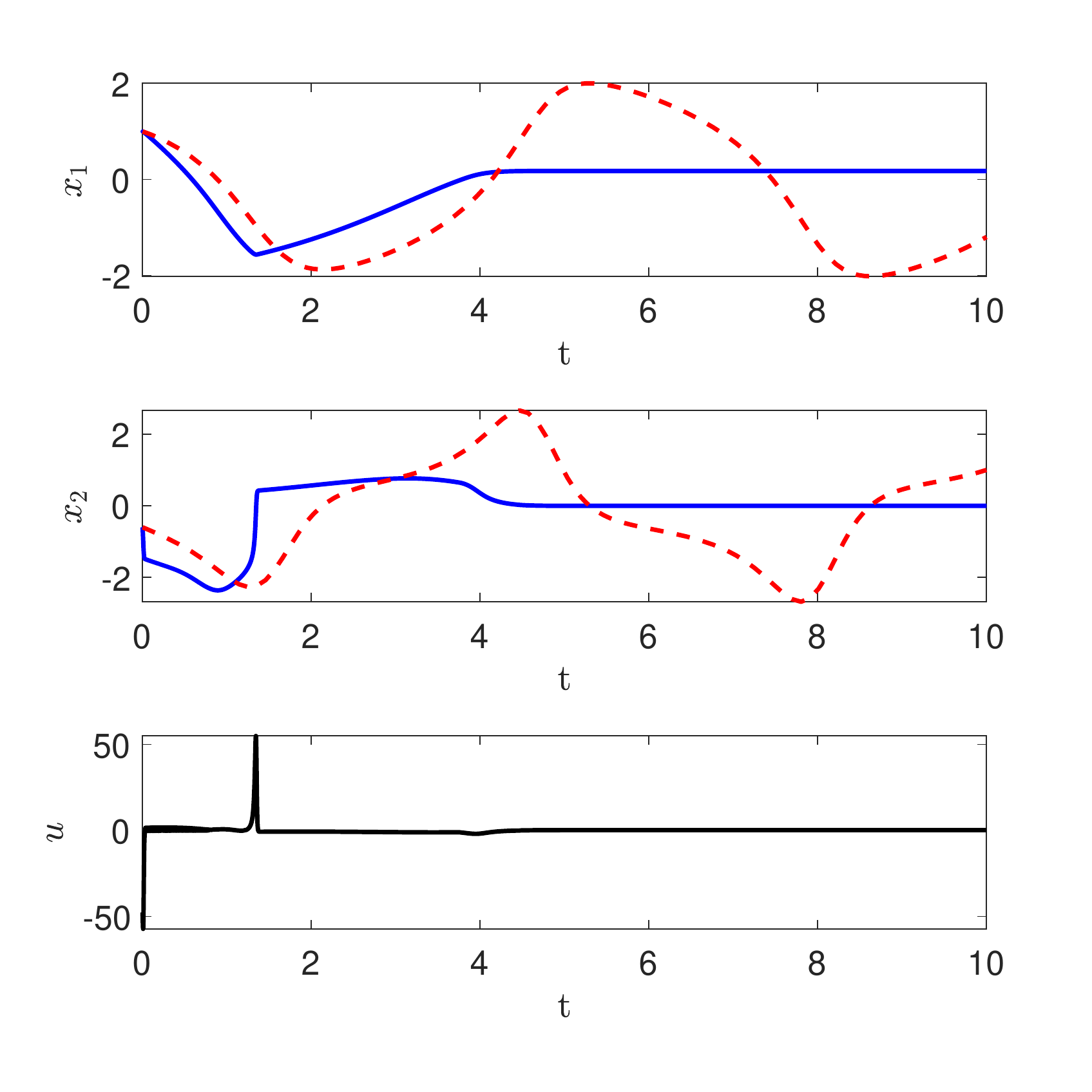}}
		%\vspace{-1cm}
		\subfigure[]{\label{fig:fig8b}\includegraphics[width=0.4\textwidth]{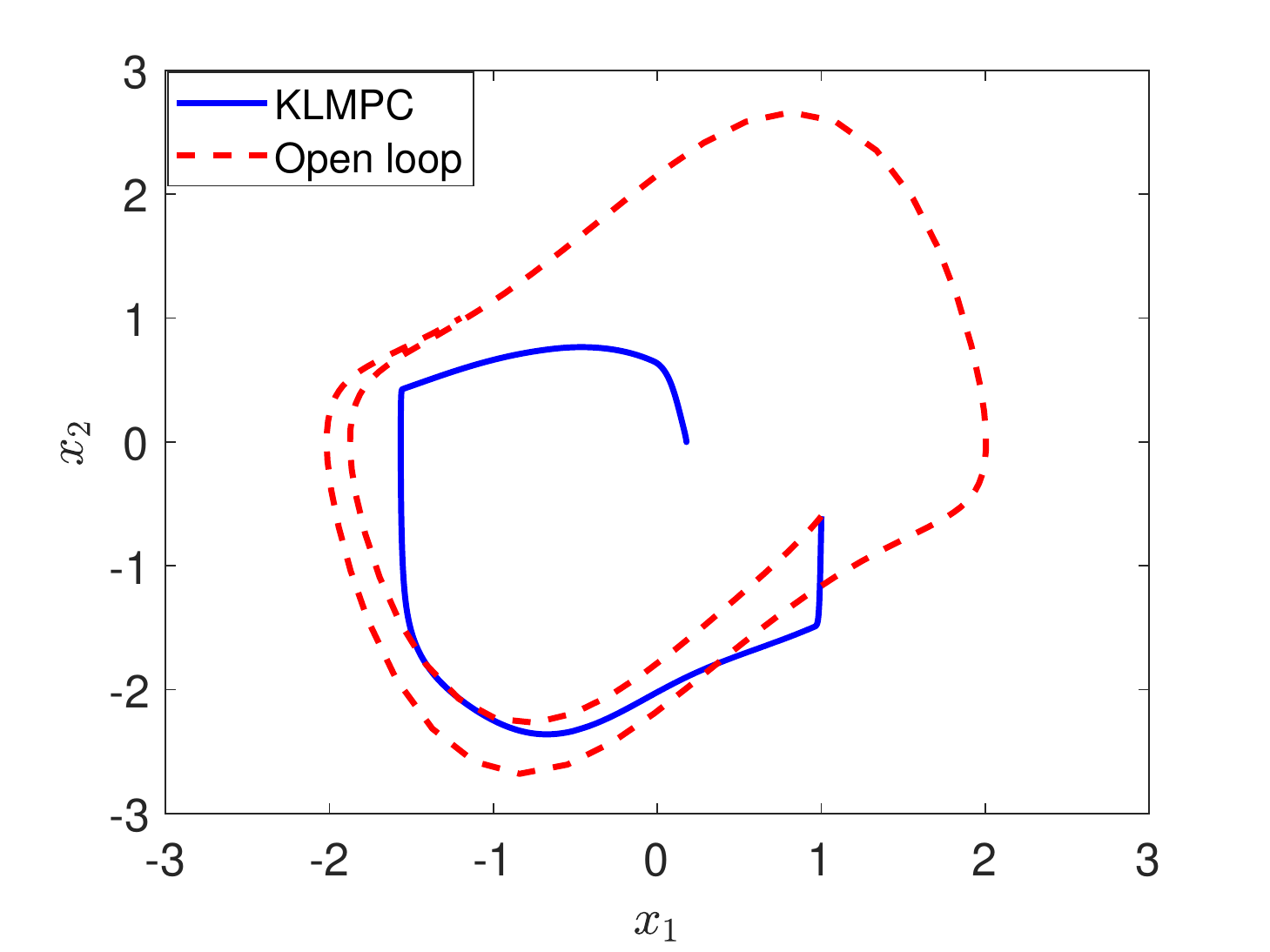}}
		\caption{Comparison of open-loop and closed-loop trajectories for the Van
			der Pol oscillator with $u$ from \eqref{eq21a} - \eqref{eq21f}.}\label{fig2}
		\vspace{-0.5cm}
	\end{figure}

\subsection{Simple pendulum}
The next example we considered is the controlled two dimensional pendulum oscillator given by the following dynamics:

\begin{equation}\label{pen}
\begin{aligned}
& \dot{x}_1 = x_2 \\
& \dot{x}_2 = 0.01x_2 - sin(x_1) + u
\end{aligned}
\end{equation}
where $[x_1,x_2] = [\theta,\dot{\theta}] \in \mathbb{R}^2$ denote the angular displacement and angular velocity of the pendulum, respectively. The system of \eqref{pen} is characterized by a unique unstable equilibrium point at the origin. We considered the system dynamics near the unique unstable equilibrium point at the origin all the way until the limit cycle (shown in Fig.~\ref{fig3}). The Koopman models have to make predictions over this range of initial conditions, and the control objective is to stabilize the system at the origin. 

The training data were generated by simulating the unforced pendulum equation from uniform random initial conditions $(x_1(0),x_2(0)) \in [-2,2] \times [-2,2]$. From each trajectory,  $10^3$ samples were recorded at $\Delta = 0.01$ s apart. Similar to the previous example, the dictionary of observable functions required for nonlinear transformation was considered to be monomials of degree up to $5$, i.e., $\mathbf{z} \in \mathbb{R}^{21}$. The approximation of the Koopman operator and eigenfunctions was then performed by lifting the time-series data samples using the selected dictionary. The system matrices, $\Lambda$, and the control matrix, $B$, were then used to design the feedback controller proposed in \eqref{eq21}. The CLF used in the explicit control design was determined by solving the optimization problem of \eqref{eq40} using the \texttt{cvx} package, a MATLAB-based modeling system for solving disciplined convex optimization problems and is much suitable for semidefinite matrix optimization problems like \eqref{eq40}.   

It is worth mentioning that the proposed Koopman LMPC controller design is not restricted to using a specific form of control law for $h(\mathbf{z})$. In fact, besides Sontag's formula, there are several other possible choices for the explicit controller $h(\mathbf{z})$. Provided we are not constrained to specifications on the amplitude of feedback, we can use the following simple feedback law to define the control law: $h(\mathbf{z}) = -kL_BV(\mathbf{z}) = -k\mathbf{z}^T(PB + B^TP)\mathbf{z}$ . In this example, the value of $k$ was chosen to be $k=10$. The matrices $W$ and $R$ in \eqref{eq21a} were chosen to be $W = I \in \mathbb{R}^{21 \times 21}$ and $R = 1$, respectively. The prediction horizon was set to $1~s$, i.e., $N_p = 1/\Delta = 100$. For the closed-loop simulation, we randomly selected initial points within $[-1, 1]\times[-1, 1]$ and solved the closed-loop system with \texttt{ode45} solver in MATLAB. Figure~\ref{fig3} shows the comparison between open and closed loop results for one such initial condition. It can be observed from Figure~\ref{fig3}  that the controller forced the trajectory of the closed-loop system to the origin as desired. Moreover, in the case of pendulum system, the limit cycle of the open loop system corresponds to the boundary of the basin of attraction and the proposed Koopman LMPC controller forced the states to remain inside this stability region (limit cycle) at all times before the trajectories slide to the origin.
	\begin{figure}[t]
	\centering     %%% not \center
	\subfigure[]{\label{fig:fig3a}\includegraphics[width=0.4\textwidth]{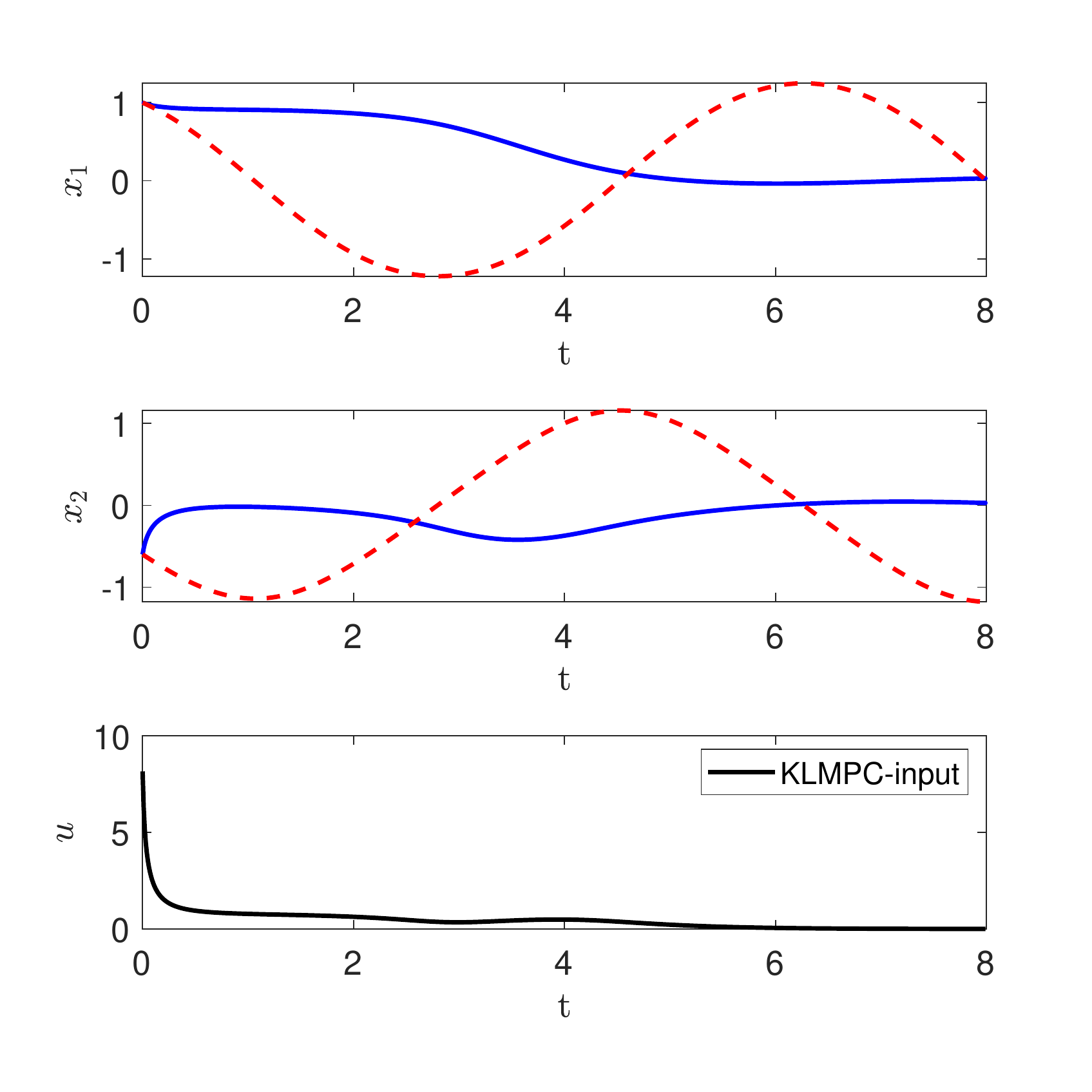}}
	%\vspace{-1cm}
	\subfigure[]{\label{fig:fig3b}\includegraphics[width=0.4\textwidth]{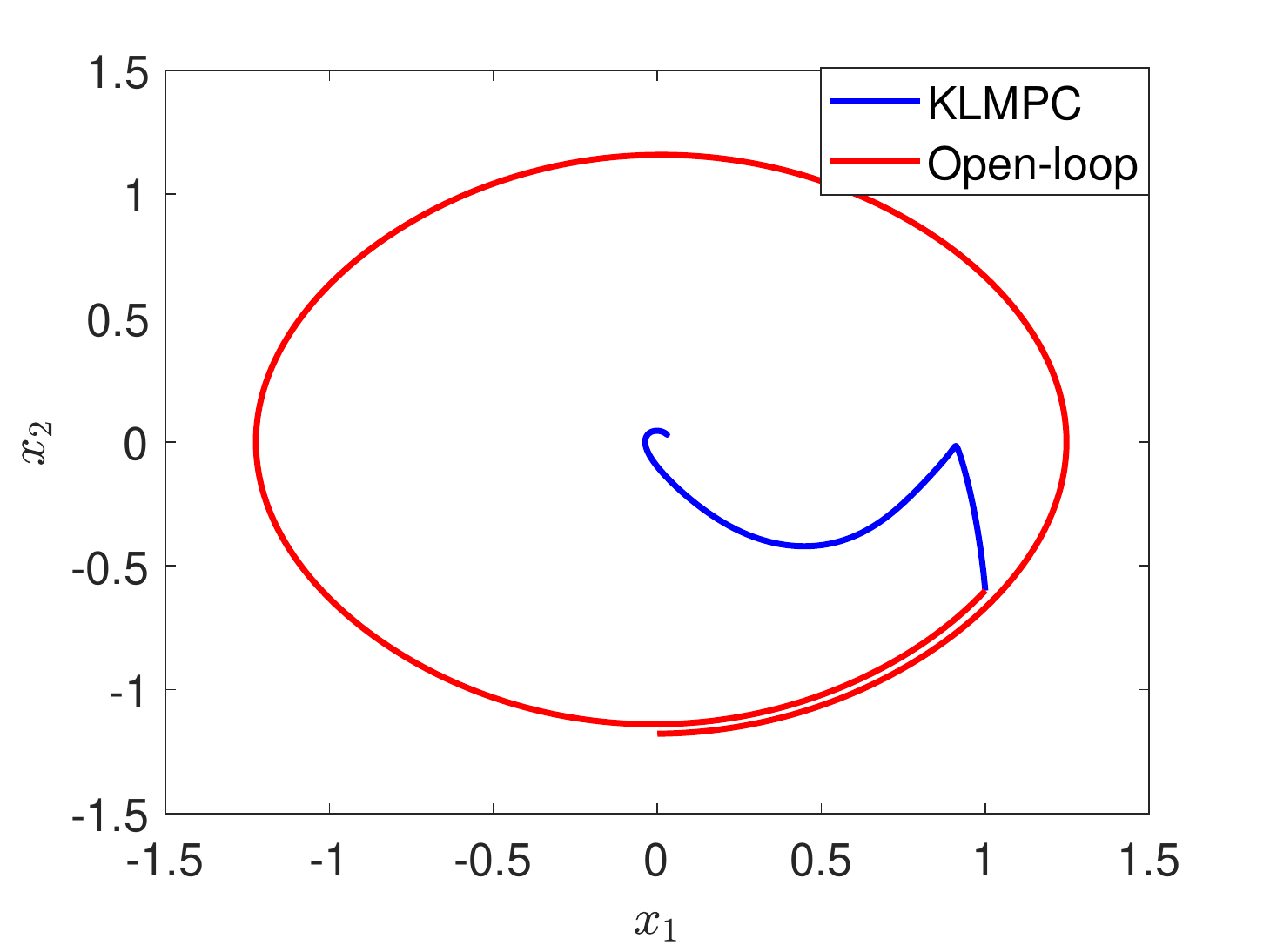}}
	\caption{Comparison of open-loop and closed-loop trajectories for the simple pendulum oscillator with $u$ from \eqref{eq21a} - \eqref{eq21f}.}\label{fig3}
	\vspace{-0.5cm}
\end{figure}
    
%\vspace{-0.5cm}
\section{CONCLUSIONS}\label{Conclusions}
In this manuscript, we introduced a new approach for designing stabilizing feedback controllers for nonlinear dynamical systems. Leveraging Koopman operator theory, nonlinear dynamics are lifted to a function space where they are embedded in bilinear models that are computed using finite-dimensional approximations to the Koopman operator and its eigenfunctions. A feedback controller is then designed using LMPC that uses explicit Lyapunov constraints to characterize closed-loop stability of the Koopman bilinear system. Due to the bilinear structure of the Koopman model, the CLF can be obtained easily by limiting the search to the class of quadratic functions via an optimization problem. Furthermore, universal control approaches like Sontag's formula readily provides the explicit control law required in the LMPC formulation which is typically a bottleneck for general nonlinear systems. Most importantly, we demonstrated, based on the stability of the Koopman model, that the proposed controller was capable of stably regulating nonlinear dynamics in the original state-space provided that a continuously differentiable inverse mapping exists. The numerical examples indicated that the proposed feedback controller was able to successfully force unstable dynamics to the origin. This was observed from the closed-loop plots presented. Future work will focus on certifying the proposed approach in terms of robustness in the presence of uncertainties. Furthermore, we hope to apply the proposed approach to other flow control problems, studying whether it can provide similar insight into how to design stabilizing feedback controllers for other applications.

\bibliographystyle{IEEEtran}

\begin{IEEEbiography}[{\includegraphics[width=1in,height=1.25in]{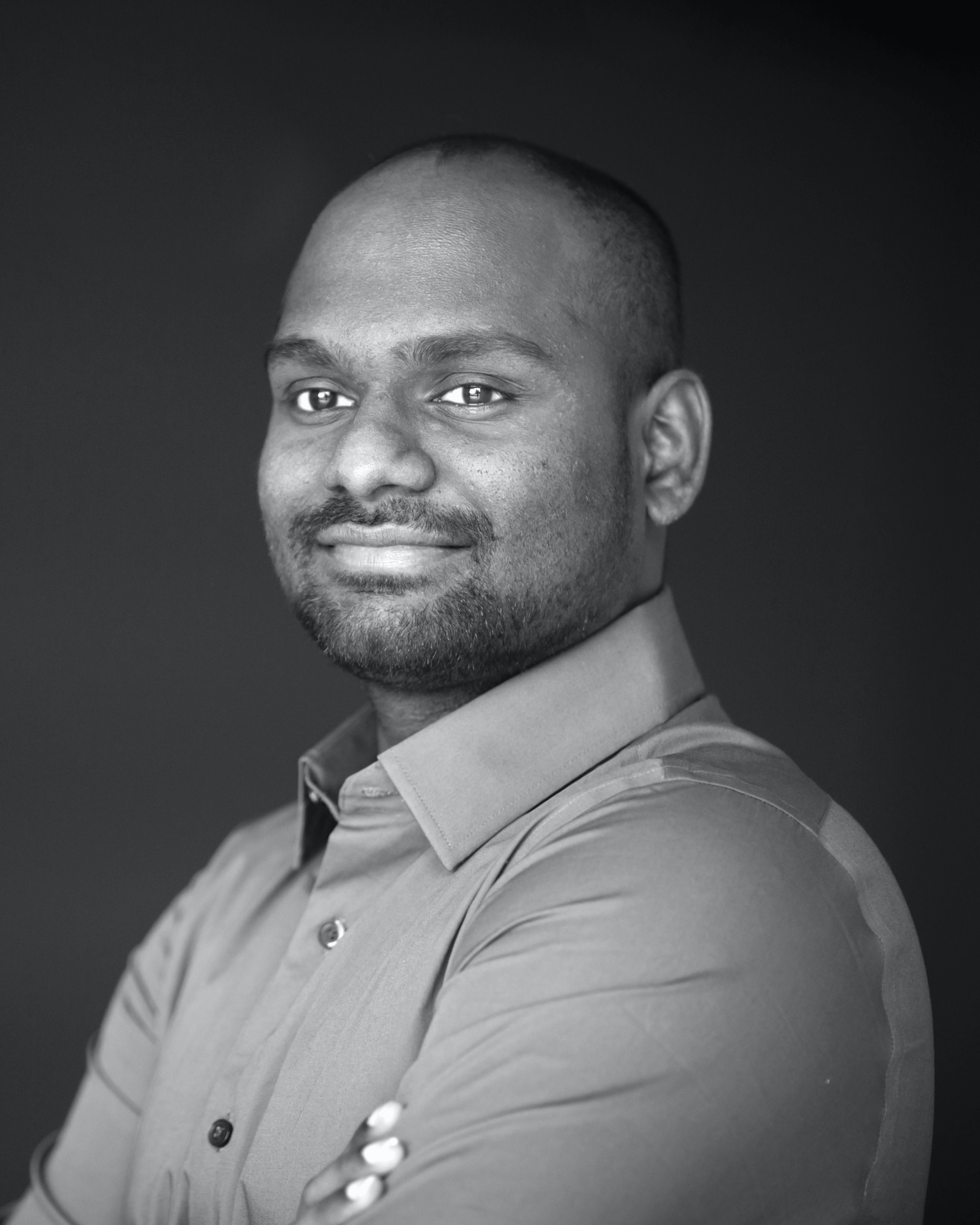}}]
	{Abhinav Narasingam}
	was born in Hyderabad, India in 1991. He received the B.Tech degree in Chemical Engineering from the Indian Institute of Technology Madras in 2013. He is currently pursuing his Ph.D. degree in Chemical Engineering from the Texas A\&M University (TAMU), College Station. 
	
	He is an active member of American Institute of Chemical Engineers (AIChE). He was selected for the Computing \& Systems Technology (CAST) division's directors' student presentation award finals at AIChE Annual Meeting 2019. His research focuses on data-driven reduced-order modeling, applications of operator theoretic methods to dynamical systems and model predictive control of distributed parameter systems. 
\end{IEEEbiography}
\vspace{-1cm}
\begin{IEEEbiography}[{\includegraphics[width=1in,height=1.25in]{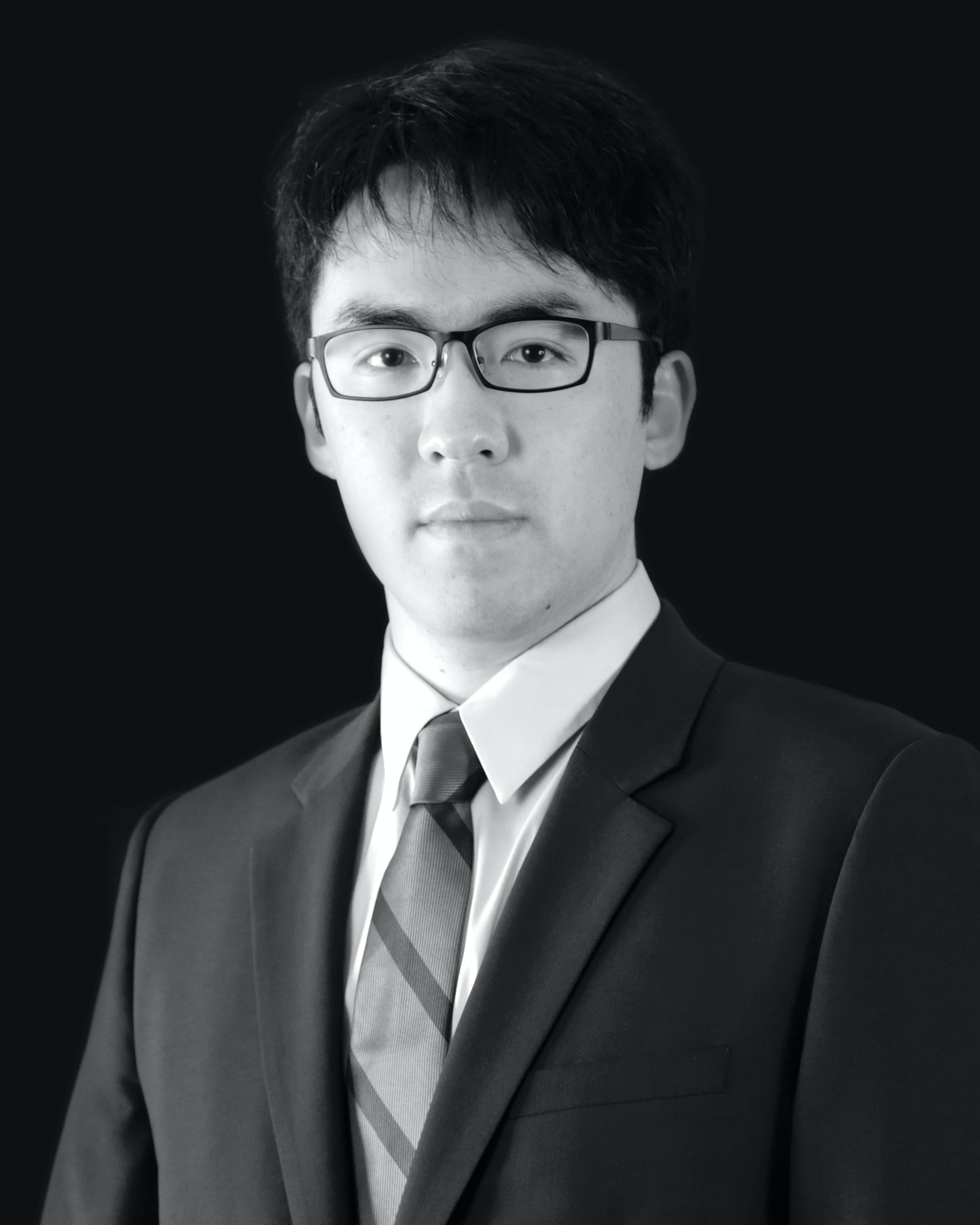}}]
	{Joseph S. Kwon}
 was born in Secaucus, NJ, USA, in 1987. He received the B.S. degrees in Chemical Engineering, Mathematics, and Chemistry (minor) from the University of Minnesota, Twin Cities, in 2009. Then, he received the M.S. degree in Electrical Engineering from the University of Pennsylvania, in 2011, and the Ph.D. degree in Chemical Engineering from the University of California, Los Angeles, in 2015. 

He joined the Department of Chemical Engineering at Texas A\&M University (TAMU), where he is currently an Assistant Professor. His research focuses on multiscale modeling, computation and control of chemical and biological processes with a specialization in oil and gas processes. He is the author of more than 60 peer-reviewed journal publications. 

He has received several awards for his teaching and research work including a President Young Investigator Award from the Korean Institute of Chemical Engineers (KIChE) in 2019, the 2020 Distinguish Teaching Award by the Department of Chemical Engineering at TAMU, the 2020 TEES Young Faculty Fellow Award by the College of Engineering, and a Young Investigator Grant by the Korean-American Scientists and Engineers Association (KSEA) in 2020. He is an Associate Editor of Journal Control, Automation and Systems, and Frontiers in Chemical Engineering. Recently, he has been elected as 10E (Data and Information Systems) programming coordinator for the Computing \& Systems Technology (CAST) Division in the American Institute of Chemical Engineers (AIChE) for 2022. 

\end{IEEEbiography}

\end{document}